\newtheorem{theorem}{Theorem}
\newtheorem{lemma}[theorem]{Lemma}
\theoremstyle{remark}
\newtheorem{proposition}[theorem]{Proposition}
\newcommand{\acks}[1]{%
  \section*{Acknowledgments and Disclosure of Funding}%
  #1\par
}
\newtheorem{condition}{Condition}
\begin{document}

\title{Training-Set Conditionally Valid Prediction Sets with Right-Censored Data}



\author{%
  Wenwen Si\\
  University of Pennsylvania\\
  \texttt{wenwens@seas.upenn.edu}
  \and
  Hongxiang Qiu\\
  Michigan State University\\
  \texttt{qiuhongx@msu.edu}
}

\date{\today}

\maketitle

\begin{abstract}
Uncertainty quantification of prediction models through prediction sets is increasingly popular and successful, but most existing methods rely on directly observing the outcome and do not appropriately handle censored outcomes, such as time-to-event outcomes.
    \citet{candes2023conformalized} and \citet{gui2024conformalized} have introduced distribution-free conformal prediction methods that construct predictive intervals for right-censored outcomes with 
    marginal coverage guarantees. However, these methods typically assume a restrictive Type I censoring framework, in which censoring times are 
    all observed.

    In this paper, we leverage a semiparametric one-step estimation framework and propose a novel approach for constructing predictive lower bounds on survival times with training-set conditional validity under right-censoring, where censoring times may be unobserved when the survival time is observed.
With slight modification, our method can also provide predictive lower bounds with marginal guarantees.
Through extensive simulations and a real-world application dataset tracking users' active times on a mobile application, we demonstrate the effectiveness and practicality of our approach. Compared to existing methods, our technique shows superior efficiency and robustness to model misspecifications, representing a significant advancement in the toolbox for reliable machine learning for time-to-event outcomes.
\end{abstract}

\section{Introduction}

Time-to-event outcomes arise in a wide range of applications, including clinical trials \citep{fleming2000survival, singh2011survival}, ecology \citep{muenchow1986ecological}, and economics \citep{powell1986censored, hong2003inference}. 
Time-to-event outcomes are often subject to right-censoring; that is, the follow-up may terminate at a censoring time before the event occurs, in which case the time-to-event is only known to exceed the censoring time but is not directly observed.
Such incomplete observation poses a unique challenge compared to fully observed outcomes.
A variety of survival analysis methods have been developed to analyze right-censored data, including the Kaplan-Meier (KM) estimator \citep{kaplan1958nonparametric} and Cox proportional hazards model \citep{cox1972regression}, among others.

Despite the success of these methods, uncertainty quantification for right-censored data remains challenging.  
For fully observed (i.e., non-censored) outcomes, a widely recognized method for distribution-free uncertainty quantification is conformal prediction (CP) \citep{saunders1999transduction, vovk1999machine, vovk2005algorithmic, chernozhukov2018double, dunn2018distribution, lei2014distribution, lei2013distribution, lei2018distribution}. CP guarantees a high probability of covering a new observation, where this probability is assessed marginally over both the calibration data and the new observation. 
With slight modification, split CP (also known as inductive CP) can also provide prediction sets with training-set conditional (i.e., PAC) guarantees \citep{vovk2012conditional}.
However, extending CP to censored data is nontrivial.

Some recent work provides solutions with conformal-based inference. Specifically, \citet{candes2023conformalized} and \citet{gui2024conformalized} consider Type I censoring, where all censoring times are observed. They treat censoring as a distribution shift and propose methods to compute lower prediction bounds (LPBs) with marginal guarantees.
\citet{sesia2024doubly} extends their setting to the right-censored data by imputing unobserved censoring times using a suitable model for the censoring time. All these methods share a desirable property called \emph{double robustness} \citep{bang2005doubly}, meaning they are robust against inconsistent estimation of one of the two nuisance functions as long as the other is estimated accurately. \citep{holmes2024two} use conformal prediction to generate two-sided predictive bounds for individuals deemed sufficiently similar to the non-censored population. Most relevant to our work, \citet{gui2024conformalized} and concurrent work \citet{farina2025doubly} achieve the probably asymptotically approximately correct (PAAC) \citep{qiu2023prediction} in Eq.~\eqref{eq:paac}, an asymptotic training-set conditional validity.

Beyond CP-based methods that heavily rely on exchangeability (possibly after weighting),
prediction sets can be constructed based on estimators of coverage.
In particular, for more complicated settings such as distribution shift and censoring, semiparametric efficiency theory \citep{werner2004lectures, pfanzagl1985contributions, pfanzagl1990estimation, van2000asymptotic} provides a flexible framework to estimate coverage. 
Specifically, one-step estimators based on efficiency theory can asymptotically efficiently estimate the coverage of prediction sets and thus help construct prediction sets with asymptotic marginal coverage or PAC-style guarantees \citep{Yang2022,qiu2023prediction,qiu2024efficient}. In the context of survival analysis, several doubly robust estimators have been developed to estimate treatment-specific survival curves in continuous time \citep{hubbard2000nonparametric,bai2013doubly,westling2023inference}. 

In this paper, we propose a novel method to construct lower prediction bounds for right-censored data based on an asymptotically efficient one-step corrected estimator of coverage error, achieving training-set conditional validity. 
In particular, our method is the first to achieve the asymptotic probably approximately correct (APAC) guarantee \citep{qiu2023prediction} in Eq.~\eqref{eq: apac}, another form of asymptotic training-set conditional validity, which may be more desirable than PAAC in Eq.~\ref{eq:paac} for safety purposes. 
Building on \citet{hubbard2000nonparametric,bai2013doubly,westling2023inference}, we derive a closed-form expression for the efficient influence function of the covariate-dependent LPB, enabling efficient nonparametric inference. 
Our method can also be easily modified to achieve marginal validity.
Notably, our estimator remains consistent even if either the censoring or survival models is inconsistently estimated, thereby ensuring double robustness of marginal validity.
In simulations, our method achieves superior performance compared to existing methods in \citet{candes2023conformalized, gui2024conformalized}, regardless of whether all censoring times are observed.

The remainder of this paper is structured as follows. Section~\ref{sec:2} introduces the problem setup. Section~\ref{sec:3} presents our proposed method, TCsurv (Training-set Conditional survival), including key identification results and the derivation of the efficient influence function for population-level miscoverage. Section~\ref{sec:4} establishes the theoretical properties of TCsurv. Theorem~\ref{thm3} shows that TCsurv constructs asymptotically PAC prediction sets for asymptotically linear estimators by controlling nominal vs. realized coverage discrepancies in Wald confidence intervals. Section~\ref{sec:5} assesses empirical performance through simulations, and Section~\ref{sec:6} illustrates real-world applicability.

\section{Preliminaries}\label{sec:2}

In this section, we first describe right-censored data in survival analysis. We then provide a formal definition of the lower prediction bound with asymptotically probably approximately correct validity.

\subsection{Right-censored data}
Let $(W, C, T)$ denote the prototypical full data point, where $W \in \mathcal{W} \subseteq \mathbb{R}^p$ is the $p$-dimensional covariate vector, $T \in \mathbb{R}_{\geq 0}$ is the survival time, and $C \in \mathbb{R}_{\geq 0}$ is the censoring time.
Right censoring occurs when the event time $T$ is not fully observed because it exceeds $C$, so that only a follow-up time $Y := T \wedge C$ is recorded.
Here, $\wedge$ denotes the minimum of two real numbers.
For example, in a clinical trial with a fixed end date, $W$ may include covariates such as age and gender, $C$ represents the time from enrollment to the trial’s end, and a patient is either censored ($T > C$) or experiences the event ($T \leq C$). 
Let $P_0$ denote the true, unknown distribution of $(W,C,T)$.

There are two common right-censoring settings, distinguished by whether the censoring times $C$ are observed. In Type I censoring, as considered by \citet{candes2023conformalized, gui2024conformalized}, all censoring times $C$ are fully observed. In this case, for each individual $i$, we observe the tuple $(W_i, C_i, Y_i)$, where $Y_i = T_i \wedge C_i$. This setting commonly arises in clinical trials, in which participants remain in the study until a predetermined end date.

However, in many other practical scenarios, censoring may occur due to spontaneous dropout from the study, so that censoring times $C$ are not always observed. 
In this case, for each individual $i$, we observe the tuple $O_i = (W_i, \Delta_i, Y_i)$, where $\Delta_i = \mathbbm{1}[T_i \leq C_i]$
indicates whether the event occurred before censoring. 
Note that $P_0$ uniquely determines the distribution of the observed data point $O$ in this setting.
We consider this setting throughout the paper and assume that $N$ independently and identically distributed (i.i.d.) copies of $O_i$ are observed.

\subsection{Lower prediction bound}
We first split the data $\mathcal{D}$ into a training set $\mathcal{D}_{\mathrm{train}}$ and a calibration set $\mathcal{D}_{\mathrm{cal}}$, with respective index sets $I_{\mathrm{train}}$ and $I_{\mathrm{cal}}$, such that $|I_{\mathrm{cal}}| = n := c\cdot N$ and $|I_{\mathrm{train}}| = m := (1-c) \cdot N$ for a proportion $c \in (0,1)$.
In the rest of the paper, we consider the asymptotic scenario where $n \to \infty$ and $n/N = c$ for a fixed proportion $c \in (0,1)$. Given a user-specified miscoverage level $\alpha$, we aim to construct a sequence of \textit{asymptotically probably approximately correct (APAC)} \citep{qiu2024efficient}, or, equivalently, asymptotically \textit{training-set conditionally valid} \citep{vovk2012conditional} lower prediction bounds $(\hat{L}_n)_{n\geq 1}$ such that, with miscoverage level $\alpha$ and confidence level $1 - \beta$ ($\alpha, \beta \in (0, 1)$), 
\begin{equation}\label{eq: apac}
    \mathrm{Pr}_{\mathcal{D}} \left( \mathrm{Pr}_{(W,T) \sim P_0} \left( T > \hat{L}_n(W) \mid \mathcal{D}\right) \geq 1- \alpha \right) \geq 1 - \beta - o(1)
    \text{ as } n \rightarrow \infty,
\end{equation}
where $(W,T)$ is from a future full observation independently drawn from $P_0$, and the $o(1)$ term tends to zero as $n \rightarrow \infty$. We will provide a more detailed discussion of the difference between this guarantee and the aforementioned PAAC guarantee in Section~\ref{sec:7}.

Our method can also construct LPBs with an asymptotic marginal guarantee, with exceptional robustness. A detailed definition of the different types of guarantees is provided in Appendix~\ref{def:guarantee}.

\section{Methodology}\label{sec:3}
With a given LPB, we treat the coverage as a mapping from the true distribution to a real-valued parameter.
In Section~\ref{sec:iden diff}, we show that the coverage can be estimated from the observed data despite censoring, and that this mapping is differentiable in an appropriate sense under a nonparametric model.
In Sections~\ref{sec: nuisance} and \ref{sec:one-step}, we construct an efficient estimator of the coverage based on the aforementioned differentiability. This estimator relies on a one-step correction akin to a single step of Newton’s method, based on the linear approximation of the true coverage from an initial estimator \citep{fisher2021visually}.

\subsection{Identification and pathwise differentiability} \label{sec:iden diff}

We first 
estimate the marginal coverage \(\Psi(P_{0};L) := P_{0}(T > L(W))\) for a covariate-dependent LPB $L$ of the survival time $T$, where $L$ may be estimated from the training data $\mathcal{D}_{\mathrm{train}}$ and $(T,W)$ is randomly drawn from $P_{0}$ independent from the data.
We may drop $L$ from the notation for conciseness when the LPB $L$ is clear from the context.
We denote summaries of $P_0$ with the subscript 0; for example,  
$E_0[f(O)] := E_{P_0} [f(O)]$.

We first provide a formal identification result for our parameter $\Psi(P_{0})$ for the LPB $L$ as a functional of the observed data distribution implied by $P_{0}$.
Specifically, we assume that there exists $t_0 \in (0,\infty)$ such that $L(W) \leq t_0$ almost surely and the following conditions hold: 

\begin{condition}[Conditional independence]\label{id: cond1}
    \( T \cdot\mathbbm{1}(T \leq t_0) \perp\!\!\!\perp C \cdot\mathbbm{1}(C \leq t_0) \mid W \),
\end{condition}

\begin{condition}[positive probability]\label{id: cond2}
    \( P_0(C \geq t_0 \mid W=w) > 0\) for $P_0$-almost every $w$.
\end{condition}

Condition~\ref{id: cond1} allows the survival time and censoring time to be dependent, as long as they are conditionally independent given the covariate $W$. Additionally, Condition~\ref{id: cond2} ensures that within each stratum determined by the value of covariate $W$, there exists a nonzero probability of remaining uncensored at time $t_0$. This assumption is reasonable because, otherwise, for some stratum, all censorings occur before time $t_0$, and thus the observed data do not contain enough information about the distribution of $T \cdot \mathbbm{1}(T \leq t_0)$.

Under the two conditions, with the LPB $L$ fixed, $\Psi(P_0)$ depends on $P_0$ only through the marginal distribution of $W$ and the conditional survival function $S_0:(u \mid w) \mapsto \mathrm{Pr}_{P_0}(T > u \mid W=w)$. In this way, we can equivalently view $\Psi$ as a functional of the covariate distribution and the conditional survival function. 
It is well known that the conditional survival function $S_0$, the conditional censoring function $G_0$, and $\Psi(P_0;L)=E_0[S_0(L(W) \mid W)]$ are all estimable under these two conditions \citep[e.g.,][]{fleming2000survival}.

Then, we present the nonparametric efficient influence function (EIF) of $\Psi$. It characterizes the smallest large-sample variance among all \emph{regular estimators} and provides a basis for constructing an efficient estimator that attains this optimal variance.
For any LPB $L$, any generic survival functions $G$ and $S$, with $\Lambda$ denoting the cumulative hazard function associated with $S$ which can be expressed as an integral involving terms of $S$ (see Eq.~\eqref{eq:lambda} in Proposition~\ref{prop}), we define the function
\begin{align*}
    \phi(S, G; L) : (w,\delta,y) &\mapsto S(L(w) \mid w) \\ &\qquad\times \left[ 1 - \left\{ \frac{I(y \leq L(w) , \delta = 1)}{S(y \mid w)G(y \mid w)} - \int_{(0,L(w)\wedge y]} \frac{\Lambda(du \mid w)}{ S(u \mid w)G(u \mid w)} \right\} \right]
\end{align*}
whenever the denominators are nonzero.
With $P$ denoting a distribution of the tuple $O=(W,\Delta,Y)$, we also define the function
\begin{equation}
    D(P, G, S; L): o:=(w,\delta,y) \mapsto \phi(S,G;L)(o) - \Psi(P;L). \label{eq: d_tau}
\end{equation}

\begin{theorem}\label{thm: eif}
If there exists \(\eta_1 > 0\) such that 
$G_0(L(w) \mid w) \geq \eta_1$
for \(P_0\)-almost every \(w\) such that \(S_0(L(w) \mid w) > 0\), then \(P \mapsto \Psi(P;L) \) is a pathwise differentiable parameter in a nonparametric model with efficient influence function $D(P_0, G_0, S_0; L)$ at $P_0$.
\end{theorem}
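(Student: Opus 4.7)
The plan is to verify that $D(P_0, G_0, S_0; L)$ satisfies the pathwise derivative identity
\[
  \tfrac{d}{d\epsilon}\Psi(P_\epsilon; L)\big|_{\epsilon=0} \;=\; E_0[D(P_0, G_0, S_0; L)(O)\, s(O)]
\]
for every regular one-dimensional submodel $\{P_\epsilon\}$ through $P_0$ with score $s \in L^2_0(P_0)$, and then to check $D \in L^2_0(P_0)$. Since the tangent space of a nonparametric model equals all of $L^2_0(P_0)$, this identifies $D$ as the unique EIF.

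I would begin by decomposing the observed-data score along the factorization $dP_\epsilon(O) = dP_{W,\epsilon}(w)\, dP_{\Delta,Y\mid W,\epsilon}(\delta, y\mid w)$. Condition~\ref{id: cond1} further splits the conditional score into orthogonal pieces $s_{T\mid W}$ and $s_{C\mid W}$ from the conditional distributions of $T\mid W$ and $C\mid W$. Because $\Psi(P_\epsilon; L) = E_\epsilon[S_\epsilon(L(W)\mid W)]$ depends on $P_\epsilon$ only through $P_{W,\epsilon}$ and $S_\epsilon$, the $s_{C\mid W}$ component drops out, and the chain rule yields
\[
  \tfrac{d}{d\epsilon}\Psi(P_\epsilon; L)\big|_{\epsilon=0} \;=\; E_0\!\left[S_0(L(W)\mid W)\,s_W(W)\right] \;+\; \int \tfrac{d}{d\epsilon} S_\epsilon(L(w)\mid w)\big|_{\epsilon=0}\, dP_{W,0}(w).
\]
The product-integral identity gives $\tfrac{d}{d\epsilon} S_\epsilon(t\mid w)\big|_0 = -S_0(t\mid w)\, \tfrac{d}{d\epsilon}\Lambda_\epsilon(t\mid w)\big|_0$, and the standard right-censored score calculation gives
\[
  \tfrac{d}{d\epsilon}\Lambda_\epsilon(t\mid w)\big|_{\epsilon=0} \;=\; E_0\!\left[\int_{(0, t]} \frac{dM_0(u\mid W)}{S_0(u\mid W)\,G_0(u\mid W)}\, s_{T\mid W}(O) \,\Big|\, W=w\right],
\]
where $M_0(t\mid w) := \mathbbm{1}(Y\le t,\,\Delta=1) - \int_{(0, t\wedge Y]}\Lambda_0(du\mid w)$ is the compensated counting-process martingale conditionally on $W$.

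Combining these two displays and rewriting the bracket in $\phi$ as $\int_{(0, L(W)]}(S_0 G_0)^{-1}\, dM_0(u\mid W)$, so that $\phi(S_0, G_0; L)(O) = S_0(L(W)\mid W)\left[1 - \int_{(0, L(W)]}(S_0 G_0)^{-1}\, dM_0(u\mid W)\right]$, one matches the right-hand side $E_0[D\cdot s]$ termwise: the $S_0(L(W)\mid W)-\Psi(P_0;L)$ part pairs with $s_W$, while the martingale part pairs with $s_{T\mid W}$, with all cross-terms vanishing by the martingale identity $E_0[\int h\, dM_0(\cdot\mid W)\mid W]=0$ and by orthogonality of $s_{C\mid W}$. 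The boundedness hypothesis $G_0(L(w)\mid w)\ge \eta_1$ on $\{S_0(L(w)\mid w) > 0\}$ together with Condition~\ref{id: cond2} and the uniform bound $L(W)\le t_0$ ensure $(S_0 G_0)^{-1}$ stays bounded on the effective range of integration, giving $D \in L^2_0(P_0)$ (and $E_0[D]=0$ holds by construction).

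The main technical obstacle is that $W$ can be continuous, so one cannot literally condition on a single event $\{W=w\}$ to read off $\tfrac{d}{d\epsilon}\Lambda_\epsilon(t\mid w)\big|_0$ from $s_{T\mid W}$. I would handle this by working with submodels that vary only the conditional hazard of $T\mid W$, representing $s_{T\mid W}$ in the closed linear span of mean-zero stochastic integrals against $dM_0(\cdot\mid W)$, and exchanging orders of integration by Fubini, whose hypotheses are secured by the integrability afforded by $\eta_1$ and $t_0$. Rather than redo the derivation in full, I would adapt the efficient-influence-function calculations for covariate-adjusted survival curves in \citet{hubbard2000nonparametric, bai2013doubly, westling2023inference}, of which $\Psi$ is a single covariate-averaged fixed-time evaluation.
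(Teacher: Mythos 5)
Your proposal is correct in substance but takes a genuinely different route from the paper. You fix the candidate influence function and verify the pathwise-derivative identity by decomposing the observed-data tangent space along the likelihood factorization into $s_W$, a $T\mid W$ piece and a $C\mid W$ piece, representing the $T\mid W$ piece as stochastic integrals against the counting-process martingale $M_0$, and matching terms; uniqueness then follows because the model is nonparametric. The paper instead derives the EIF constructively: it writes $\Psi(P)=E_P[\Prodi_{(0,L(w)]}\{1-\Lambda_P(du\mid w)\}]$ with $\Lambda_P$ expressed through the observed-data functionals $F_{P,1}$ and $R_P$ (Proposition~\ref{prop}), differentiates the product integral via Theorem~8 of \citet{gill1990survey}, differentiates $F_{\epsilon,1}$ and $R_\epsilon$ directly against the aggregate conditional score $\dot{\ell}_0(y,\delta\mid w)$, and never splits the score into event and censoring components or invokes martingale calculus. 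Your route buys a shorter verification that leans on well-known machinery (and on adapting \citet{hubbard2000nonparametric,bai2013doubly,westling2023inference}, where $\Psi$ differs only in that the evaluation time is the known covariate-dependent $L(w)$ rather than a fixed $t$); the paper's route is self-contained given the identification result and makes explicit where the exchange of differentiation and integration is used. Two bookkeeping points in your sketch need repair in a full write-up, though neither breaks the argument: (i) the identity $\frac{d}{d\epsilon}S_\epsilon(t\mid w)\big|_0=-S_0(t\mid w)\,\frac{d}{d\epsilon}\Lambda_\epsilon(t\mid w)\big|_0$ holds only when $T\mid W$ is continuous; in general you need the Duhamel-type formula $-S_0(t\mid w)\int_{(0,t]}\frac{S_0(u^-\mid w)}{S_0(u\mid w)}\,\frac{d}{d\epsilon}\Lambda_\epsilon(du\mid w)\big|_0$, whose extra factor cancels against the matching $S_0(u^-)G_0(u^-)(1-\Delta\Lambda_0(u))$ factors in the predictable variation of $M_0$, which is exactly why the final EIF has the clean $\{S_0(u\mid w)G_0(u\mid w)\}^{-1}$ weights; and (ii) $(S_0G_0)^{-1}$ is not itself bounded—what the hypothesis $G_0(L(w)\mid w)\geq\eta_1$ gives is that the weighted integrand $S_0(L(w)\mid w)/\{S_0(u\mid w)G_0(u\mid w)\}\leq\eta_1^{-1}$ for $u\leq L(w)$ on the set where $S_0(L(w)\mid w)>0$ (the prefactor kills the $S_0$ denominator), which is what yields $D\in L^2_0(P_0)$ and hence pathwise differentiability.
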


We provide the proof of Theorem~\ref{thm: eif} in Appendix~\ref{app:eif} . 

\subsection{Estimation of nuisance functions}\label{sec: nuisance}

Following Theorem~\ref{thm: eif} and Eq.~\eqref{eq:lambda}, we see the derived EIF $D(P_0, G_0, S_0; L)$ involves two nuisance functions $S_0$ and $G_0$.
For any distribution $P$ of the observed data point $O$, define
\begin{align}\label{eq: nuisance}
     G_P(t \mid w) &:= \mathrm{Pr}_P(C > t \mid W = w), \nonumber \\
    S_P(t \mid w) &:= \mathrm{Pr}_P(T > t \mid W = w).
\end{align}
Under Conditions~\ref{id: cond1}--\ref{id: cond2}, we can estimate these nuisance functions with many existing methods, including the Weibull regression \citep{Zhang2016}, Cox proportional hazards model \citep{cox1972regression}, and random forests \citep{Ishwaran2008}.

Let $S_n$ and $G_n$ be estimators of the nuisance functions $S_0$ and $G_0$, respectively, using only the observations from the training set $\{O_i: i \in I_{\mathrm{train}}\}$. Next, we discuss the parameterization of the LPB function $L$. Primarily, the choice of $L$ should approximate the ``oracle" $\alpha$-quantile of $ T \mid W = w $. Thus we parameterize the LPB function as the $\tau$-quantile of the conditional survival time. Moreover, the data must contain enough information about $\mathrm{Pr}(T > L(w))$, that is, the probability of censoring by $L(w)$ must not be too small.
Thus, we parameterize the LPB function as 
\begin{equation}
     L_{n,\tau}(w) :=L_{S_n,G_n,\tau}(w) := \min( S_{n}^{-1}(1-\tau\mid w), G_{n}^{-1}(\eta_2\mid w)),\label{eq:L_tau}
\end{equation}
where $\eta_2$ is a user-specified number, similarly to \citet{gui2024conformalized}.
The regularization by $G_n^{-1}(\eta_2 \mid w)$ stabilizes estimation of coverage by ensuring Condition~\ref{id: cond2} to hold.
We take $\eta_2 = 1e^{-3}$ in our implementation.

\subsection{One-step estimator}\label{sec:one-step}
Next, we construct our one-step estimator for the coverage.
We let $P_n$ be the empirical distribution 
of the calibration data
$\{O_i: i \in I_{\mathrm{cal}}\}$.
Let $\hat{P}_n$ be the distribution with $W$ distributed as implied by the empirical distribution $P_n$, and the distribution of $(\Delta,Y) \mid W$ implied by $(G_n,S_n)$ and Condition~\ref{id: cond1}. Then,  
\begin{equation*}
    \Psi(\hat{P}_n;L_{n,\tau}) = n^{-1} \sum_{i=1}^n S_n(L_{n,\tau}(W_i) \mid W_i)
\end{equation*}
denote the plug-in estimator of $\Psi(P_0;L_{n,\tau})$.

Next, we describe our one-step corrected estimator for a given LPB $L_{n,\tau}$, when $L_{n,\tau}$ is independent of the calibration data. 
The plug-in estimator $\Psi_{\tau}(\hat{P}_n;L_{n,\tau})$ is suboptimal, even if $S_n$ is a good estimator of $S_0$. This is because its convergence rate is typically dictated by the convergence rate of $S_n$, which is often much slower than the optimal root-$n$ rate. The issue arises especially when $S_0$ is estimated with flexible machine learning methods.

We reduce this bias by applying a one-step correction, which adds the empirical average of the estimated EIF over the calibration data to $\Psi_{\tau}(\hat{P}_n;L_{n,\tau})$:
\begin{align}
    \hat{\psi}_{n,\tau} &= \Psi_{\tau}(\hat{P}_n;L_{n,\tau}) + \frac{1}{|I_{\mathrm{cal}}|} \sum_{i \in I_{\mathrm{cal}}} D(\hat{P}_n, G_{n}, S_{n}; L_{n,\tau})(O_i).\label{eq: est}
\end{align}
The algorithm of the estimator is explicitly described in Algorithm~\ref{alg}.

\begin{algorithm}[ht]
\caption{Split one-step estimator of coverage error $\Psi(P_0;L_{n,\tau})$}\label{alg}
\begin{algorithmic}[1]
\Require Data \( \mathcal{D} \), a tuning parameter $\tau \in (0,1)$ defining the LPB $L_{n,\tau}$ in Eq.~\eqref{eq:L_tau}, algorithms to estimate nuisance survival functions $S_0$ and $G_0$ in \eqref{eq: nuisance}, constant $\eta_2$ in \eqref{eq:L_tau}.
\State Partition the data \( \mathcal{D} \) into a training set \( \mathcal{D}_\mathrm{train} \) and a calibration set \( \mathcal{D}_\mathrm{cal} \), with respective index sets \( I_\mathrm{train} \) and \( I_\mathrm{cal}\).
\State Estimate \( S_0 \) by \( S_{n} \), estimate \( G_0 \) by \( G_{n} \) both using data out of training set $\mathcal{D}_\mathrm{train}$.
\State Obtain the split one-step corrected estimator for LPB \( L_{n,\tau} \), on the calibration set $\mathcal{D}_\mathrm{cal}$:
$$\hat{\psi}_{n,\tau} = \frac{1}{|I_{\mathrm{cal}}|} \sum_{i \in I_{\mathrm{cal}}} \phi(S_n, G_n; L_{n,\tau})(O_i).$$
\end{algorithmic}
\end{algorithm}

\subsection{Confidence Lower Bound and the Selection of Lower Prediction Bound}

Suppose a finite set $\mathcal{T}_n$ of candidate tuning parameters $\tau$ is specified by the user, possibly depending on the sample size. We aim to select $\hat{\tau}_n \in \mathcal{T}_n$ such that the corresponding LPB $L_{n,\hat{\tau}_n}$ is APAC. For each $\tau \in \mathcal{T}_n$, we obtain the asymptotically efficient estimator $\hat{\psi}_{n, \tau}$ of the coverage $\Psi(P_0;L_{n,\tau})$ and construct a Wald confidence lower bound (CLB). We estimate the asymptotic variance $\sigma^2_{0,n,\tau} := E_0[D(P_0, G_0, S_0; L_{n,\tau})(O)^2]$ with the plug-in estimator 
\[
\hat{\sigma}_{n,\tau}^2 := n^{-1} \sum_{i \in I_\mathrm{cal}} D(\hat{P}_n, G_n, S_n; L_{n,\tau})(O_i)^2
\]
via sample splitting. The $(1 - \beta)$ Wald-CLB is $\hat{\psi}_{n,\tau} - z_{\beta} \hat{\sigma}_{n,\tau}/\sqrt{n}$, where $z_\beta$ denotes the $(1 - \beta)$-quantile of the standard normal distribution.
 We then select $\tau$ based on the Wald-CLB:
\begin{equation*}
    \hat{\tau}_n := \max \left\{ \tau \in \mathcal{T}_n : \hat{\psi}_{n,\tau'} - z_\beta \hat{\sigma}_{n,\tau'}/\sqrt{n} \geq 1- \alpha \text{ for all } \tau' \in \mathcal{T}_n \text{ such that } \tau' \leq \tau \right\}.
\end{equation*}

\section{Theoretical guarantee}\label{sec:4}
We present the following sufficient conditions for ensuring the asymptotic properties of the proposed one-step estimator in Algorithm~\ref{alg}.\footnote{Throughout this paper, whenever random functions such as $S_n$ and $G_n$ appear in expectations, the expectations average only over $(W,\Delta,Y) \sim P_0$ with the random functions treated as fixed. Thus, terms such as $E_0 [\sup_{u\in[0,t]} |1/{G_{n}(u|W)}-1/G_0(u|W)|]^2$ are still random through the random nuisance estimators $S_n$ and $G_n$.}

\begin{enumerate}[label=A\arabic*]
     \item Limit 
    of nuisance estimators: There exist fixed survival functions $G_\infty$ and $S_\infty$, such that $P_0(S_\infty(t_0 \mid W) \geq \zeta)=1$ for some constant $\zeta>0$. And, for any $t \in [0,t_0)$, 
    \label{cond: C1}
\begin{align*}
E_0&\left[\sup\limits_{u\in[0,t]}\left|\frac{1}{G_{n}(u|W)}-\frac{1}{G_{\infty}(u|W)}\right|^2\right]=o_p(1), \\
E_0&\left[\sup\limits_{u\in[0,t]}\left|\frac{S_{n}(t|W)}{S_{n}(u|W)}-\frac{S_{\infty}(t|W)}{S_{\infty}(u|W)}\right|^2\right]=o_p(1).
\end{align*}
\item There exists $\eta > 1$ such that, with probability tending to one, for $P_0$-almost every $w$, $G_n(t \mid w) \geq 1/\eta$. \label{cond: C2} 
\item $E_0\left[\sup_{u\in[0,t]}\sup_{z\in[0,u]}\left|\frac{S_{n}(u|W)}{S_{n}(z|W)}-\frac{S_{\infty}(u|W)}{S_{\infty}(z|W)}\right|^2\right]=o_p(1)$. \label{cond: C3} 
\item Sufficient rate of nuisance estimators: it holds that $\sup_{\tau\in [0, 1]} r_{n,\tau} = o_p(n^{-1/2})$, where 
    $$r_{n,\tau} :=  E_0 \left| S_{n}(L_{n, \tau}(w) | W) \int_{(0,L_{n,\tau}(w)]} \left\{ \frac{G_0(u | W)}{ G_{n}(u | W)} - 1 \right\} \left( \frac{S_0 }{S_{n}}-1 \right)(du |  W) \right|.$$
  \label{cond: C4}

    \item Consistency of nuisance estimators: $S_\infty=S_0$ and $G_\infty=G_0$ for $S_\infty$ and $G_\infty$ in Condition~\ref{cond: C1}. \label{cond: C5}
\end{enumerate}
The remainder term $r_{n,\tau}$ in Condition~\ref{cond: C4} is not a mixed bias \citep{rotnitzky2021characterization} but a \emph{cross integrated error term} \citep{Ying2023}. It is a common assumption that a remainder of this form is $o_p(n^{-1/2})$ in nonparametric survival problems with right-censoring \citep[e.g.,][]{westling2023inference,Ying2023,Wang2024}.

These conditions lead to our following main result on the one-step estimator $\hat{\psi}_{n,\tau}$.

\begin{theorem}
[Asymptotic efficiency of one-step corrected estimator] \label{thm3}
Under Con-\\ditions~\ref{cond: C1}--\ref{cond: C5}, with the one-step corrected estimator $\hat{\psi}_{n,\tau}$ from Eq.~\eqref{eq: est}, the coverage $\Psi(P_0;\allowbreak L_{n,\tau}) := E[S_0(L_{n,\tau}(W) \mid W)$,
the gradient $D(P_0, G_0, S_0; L_{n,\tau})$ from Eq.~\eqref{eq: d_tau}, the conditional censoring function $G_0$ and the conditional survival function $S_0$ from Eq.~\eqref{eq: nuisance}, we have
\begin{equation}
    \sup_{\tau \in [0,1]} \left| \hat{\psi}_{n,\tau} - \Psi(P_0; L_{n,\tau}) - \frac{1}{|I_{\mathrm{cal}}|} \sum_{i\in I_{\mathrm{cal}}} D(P_0, G_0, S_0; L_{n,\tau})(O_i) \right| = o_p(n^{-1/2}). \label{eq:ovvar}
\end{equation}
\end{theorem}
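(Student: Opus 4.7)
My plan is to combine the canonical one-step / debiased decomposition with a sample-splitting empirical-process argument, and then upgrade both the bias and the stochastic error bounds to hold \emph{uniformly} in $\tau\in[0,1]$. First, since $\Psi(\hat P_n;L)+P_nD(\hat P_n,G_n,S_n;L)=P_n\phi(S_n,G_n;L)$ with $P_n$ the empirical measure on the calibration fold, Algorithm~\ref{alg} reduces to $\hat\psi_{n,\tau}=P_n\phi(S_n,G_n;L_{n,\tau})$. Combining this with the identification $E_0[\phi(S_0,G_0;L_{n,\tau})(O)]=\Psi(P_0;L_{n,\tau})$ from Theorem~\ref{thm: eif} and the $P_0$-centering of $D(P_0,G_0,S_0;L_{n,\tau})$ yields the two-term decomposition
\begin{align*}
\hat\psi_{n,\tau}-\Psi(P_0;L_{n,\tau})-P_nD(P_0,G_0,S_0;L_{n,\tau})
&=(P_n-P_0)\bigl[\phi(S_n,G_n;L_{n,\tau})-\phi(S_0,G_0;L_{n,\tau})\bigr]\\
&\quad+P_0\bigl[\phi(S_n,G_n;L_{n,\tau})-\phi(S_0,G_0;L_{n,\tau})\bigr],
\end{align*}
splitting the target quantity into an empirical-process term and a von Mises bias.

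For the bias, I would condition on $W$ and expand $E_0[\phi(S_n,G_n;L_{n,\tau})(O)\mid W]$ via the Doob--Meyer algebra that underlies the EIF derivation in Theorem~\ref{thm: eif}, following the manipulation used in the survival debiased-estimation literature. On the high-probability event from Condition~\ref{cond: C2} where $G_n\geq 1/\eta$ so that every denominator is well defined, a direct calculation yields the pointwise identity
\[
E_0[\phi(S_n,G_n;L_{n,\tau})(O)\mid W=w]-S_0(L_{n,\tau}(w)\mid w)
=-S_n(L_{n,\tau}(w)\mid w)\int_{(0,L_{n,\tau}(w)]}\!\left\{\tfrac{G_0(u\mid w)}{G_n(u\mid w)}-1\right\}\left(\tfrac{S_0}{S_n}-1\right)(du\mid w).
\]
Integrating against the marginal of $W$ and applying the triangle inequality bounds the bias in absolute value by $r_{n,\tau}$ from Condition~\ref{cond: C4}, which immediately gives $\sup_{\tau\in[0,1]}|\text{bias}(\tau)|=o_p(n^{-1/2})$.

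The empirical-process term is handled by exploiting sample splitting: $S_n$, $G_n$, and hence the whole family $\{L_{n,\tau}\}_\tau$, are measurable with respect to $\mathcal D_{\mathrm{train}}$, which is independent of the calibration sample defining $P_n$. Conditioning on $\mathcal D_{\mathrm{train}}$, Conditions~\ref{cond: C1}, \ref{cond: C2}, \ref{cond: C3}, \ref{cond: C5} combine to give $\sup_\tau\|\phi(S_n,G_n;L_{n,\tau})-\phi(S_0,G_0;L_{n,\tau})\|_{L^2(P_0)}=o_p(1)$; one argues this by decomposing $\phi$ into its indicator part (whose difference is controlled by $\|1/G_n-1/G_0\|$ via \ref{cond: C1} and \ref{cond: C2}) and its cumulative-hazard compensator (whose difference is controlled by the ratio condition \ref{cond: C3}). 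Because $\tau\mapsto L_{n,\tau}(w)$ is monotone increasing for each $w$, and the truncation by $G_n^{-1}(\eta_2\mid w)$ in \eqref{eq:L_tau} together with \ref{cond: C2} bounds $L_{n,\tau}$ by $t_0$ and keeps all denominators bounded away from zero, the class $\{\phi(S_n,G_n;L_{n,\tau})-\phi(S_0,G_0;L_{n,\tau}):\tau\in[0,1]\}$ has a uniformly bounded envelope and bracketing entropy of order $\log(1/\epsilon)$ conditional on training. A standard localized maximal inequality (van der Vaart \& Wellner 1996, Thm.~2.14.2 / Lemma 19.24-style) then upgrades the pointwise $O_p(n^{-1/2})$ rate to $o_p(n^{-1/2})$ in the supremum via the vanishing $L^2$ radius.

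The principal obstacle is this uniformity in $\tau$. Pointwise, both the cross-integrated bias bound and the $L^2$-shrinkage empirical-process bound are classical one-step-estimator arguments; the work is in verifying that the bracketing-entropy and envelope requirements hold uniformly when $L_{n,\tau}$ is itself a random monotone function of the training data that can approach the regularization boundary, and in leveraging Condition~\ref{cond: C4} as a supremum (rather than pointwise) rate so that the selector $\hat\tau_n$ from Section~\ref{sec:4} can pick any $\tau$ a posteriori. Once these uniform pieces are secured, adding the two $o_p(n^{-1/2})$ terms from the decomposition produces the asymptotic linear expansion~\eqref{eq:ovvar}.
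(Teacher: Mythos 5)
Your proposal follows essentially the same route as the paper's proof: the identical decomposition of $\hat\psi_{n,\tau}-\Psi(P_0;L_{n,\tau})-P_nD(P_0,G_0,S_0;L_{n,\tau})$ into an empirical-process term and a conditional bias term, the same Duhamel-type reduction of that bias to the cross-integrated remainder bounded by $r_{n,\tau}$ from Condition~\ref{cond: C4}, and the same sample-splitting entropy/maximal-inequality argument with shrinking $L_2(P_0)$ size (the paper routes this through its Lemmas~\ref{lemma3} and~\ref{lemma4}, using the BUEI/Donsker property and the envelope bounds $A^*_{1,n,\tau}$, $A_{2,n,\tau}$, rather than your bracketing-entropy phrasing) to obtain uniformity in $\tau$. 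The only slip is the sign in your pointwise bias identity (the paper's Lemma~\ref{lemma1} plus the Duhamel equation gives the opposite sign), which is immaterial because the bound passes through the absolute value defining $r_{n,\tau}$.
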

The proof of Theorem~\ref{thm3} can be found in Appendix~\ref{app:thm}.

Note that $E_0[D(P_0, G_0, S_0; L_{n,\tau})(O_i)]=0$, by the boundedness of $D(P_0, G_0, S_0; L_{n,\tau})$, $\sigma_{0,n,\tau}^2 := E_0[D(P_0, G_0, S_0; L_{n,\tau})(O_i)^2]$ is finite. Since $L_{n,\tau}$ is independent of the calibration data, Theorem~\ref{thm3} shows that, conditioning on the training data, $\sqrt{n} (\Psi_{\tau}(\hat{P}_n;L_{n,\tau}) - \Psi(P_0; L_{n,\tau}))/\sigma_{0,n,\tau}$ converges in distribution to the standard normal distribution as $n \to \infty$.
Next, because of the consistency in Condition~\ref{cond: C1} and Theorem~\ref{thm2}, the plug-in estimator $\hat{\sigma}_{n,\tau}^2$ is consistent for $\sigma_{0,n,\tau}^2$.
Because of the aforementioned asymptotic normality and the consistency of $\hat{\sigma}_{n,\tau}^2$, we have that
$$\left| \mathrm{Pr}(\Psi(P_0; L_{n,\tau}) \geq \hat{\psi}_{n,\tau} - z_{\beta} \hat{\sigma}_{n,\tau} / \sqrt{n}) - (1 - \beta) \right| \to 0 \quad (n \to \infty)$$
for every $\tau \in \mathcal{T}_n$ under the conditions in Theorem~\ref{thm3}. 
By Theorem~6 in \citet{qiu2023prediction}, the LPB $L_{n,\hat{\tau}_n}$ with the selected tuning parameter $\hat{\tau}_n$ is APAC.

\section{Simulations}\label{sec:5}

In this section, we empirically verify the validity of our method using simulated data. First, we provide detailed setups for the simulations, including the parameters of data distributions, baselines, and metrics. Next, we present both the quantitative and qualitative results for training-set conditional validity and marginal validity.

\subsection{Setups}

\paragraph{Setups.} We conduct experiments in six synthetic settings based on a well-established baseline  \citep{gui2024conformalized}. Each experiment generates multiple i.i.d. datasets, split equally into training, calibration, and test sets. Synthetic datasets 
are generated from a common template:
\( P_W \sim \text{Unif}([0, 4]^p) \), 
\( P_{T | W} \sim \mathrm{LogNormal}(\mu(W), \sigma^2(W)) \),
where \( p \) is the covariate dimension. We set $\alpha = 0.1$ and $\beta = 0.05$.

Settings 1-2 are univariate with independent censoring, while Settings 3-4 involve \allowbreak covariate-dependent censoring. Settings 5-6 are the most challenging, with high-dimensional covariates (\( p = 10 \)) and covariate-dependent censoring. Table~\ref{tbl:setup} summarizes the parameters for each setup. 
\begin{table}[h]
\centering
\bgroup
\setlength\tabcolsep{1pt}
\begin{tabular}{c c c c c}
\hline
Setting & $p$ & $\mu(w)$ & $\sigma(w)$ & $P_{C | W}$ \\
\hline
1 & 1 & $0.632x$ & 2 & $\operatorname{Exp}(0.1)$ \\[2ex]
2 & 1 & $3 \cdot \mathbbm{1}\{w > 2\} + w \cdot \mathbbm{1}\{w \leq 2\}$ & 0.5 & $\operatorname{Exp}(0.1)$ \\[2ex]
3 & 1 & $2 \cdot \mathbbm{1}\{w > 2\} + w \cdot \mathbbm{1}\{w \leq 2\}$ & 0.5 & $\operatorname{Exp}\left(0.25 + \dfrac{6 + w}{100}\right)$ \\[2ex]
4 & 1 & $3 \cdot \mathbbm{1}\{w > 2\} + 1.5x \cdot \mathbbm{1}\{w \leq 2\}$ & 0.5 & $\operatorname{LogNormal}\left(2 + \dfrac{2 - w}{50},\ 0.5\right)$ \\[2ex]
5 & 10 & $0.126(w_1 + \sqrt{w_3 w_5}) + 1$ & 1 & $\operatorname{Exp}\left(\dfrac{w_{10}}{10} + \dfrac{1}{20}\right)$ \\[2ex]
6 & 10 & $0.126(w_1 + \sqrt{w_3 w_5}) + 1$ & $\dfrac{w_2 + 2}{4}$ & $\operatorname{Exp}\left(\dfrac{w_{10}}{10} + \dfrac{1}{20}\right)$ \\
\hline
\end{tabular}
\caption{Parameters used in the six experimental settings.}\label{tbl:setup}
\egroup
\end{table}

\paragraph{Metrics.} For each run, we compute the proportion of runs where the coverage on the test set is at least $1-\alpha=0.9$.
For the methods achieving marginal coverage, we compute the empirical coverage and the average LPB on the test set. 
\begin{align*}
\text{Empirical coverage} &= \frac{1}{|I_\text{test}|} \sum_{i \in I_\text{test}} \mathbbm{1}\left\{T_i > \hat{L}(W_i) \right\}, \\
\text{Average LPB} &= \frac{1}{|I_\text{test}|} \sum_{i \in I_\text{test}} \hat{L}(W_i).
\end{align*}

\subsection{Training-set conditional validity results}


First, we verify our main theorem on the training-set conditional guarantee. For all the experiments, we fit \(S_n\) and \(G_n\) using the \texttt{survSuperLearner} R package \citep{westling2023inference} with the Cox proportional hazards model \citep{cox1972regression}, the Weibull regression model \citep{Zhang2016}, the generalized additive model \citep{hastie1990generalized}, and the random forest \citep{Ishwaran2008} in the library. For all univariate settings, we consider sample sizes $n = 200$, $500$, and $1000$. For multivariate settings, we also consider larger sample sizes $n = 4000$ and $8000$, and include a screening algorithm in \texttt{survSuperlearner} to obtain sparse estimators $S_n$ and $G_n$.
For each sample size, we run all methods on $200$ randomly generated datasets.

\begin{figure}[h!]
    \centering
    \begin{subfigure}{\linewidth}
        \centering
        \includegraphics[width=0.9\linewidth]{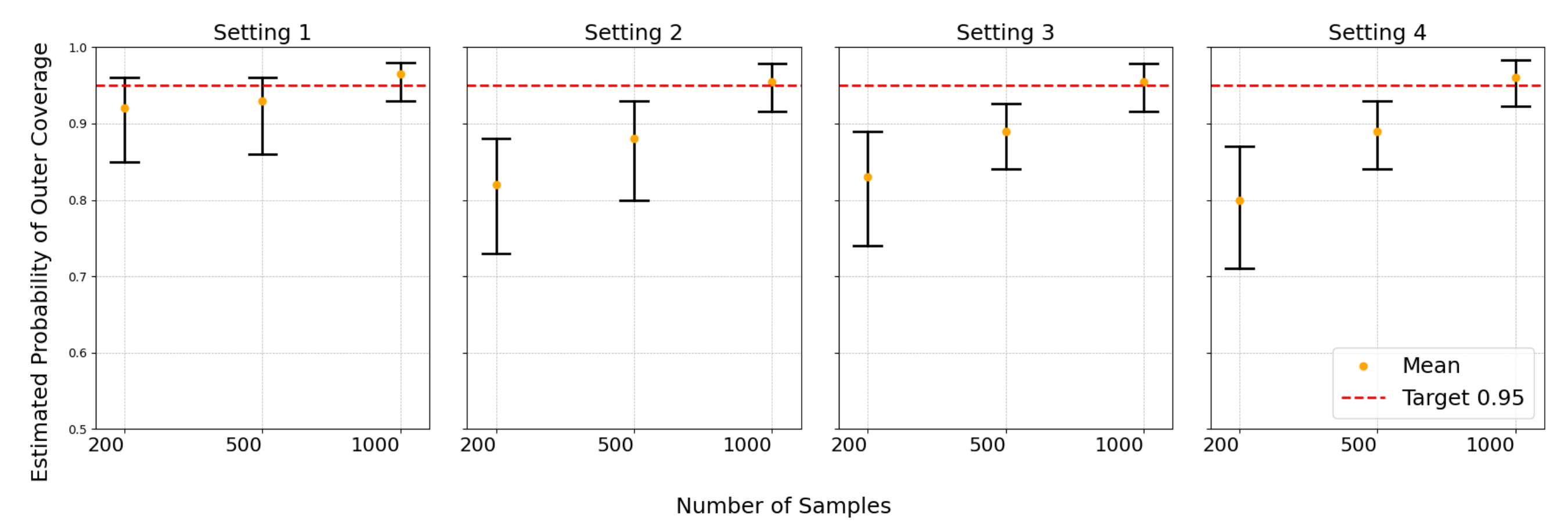}
    \end{subfigure}
    
    \vspace{0.5em} 

    \begin{subfigure}{\linewidth}
        \centering
        \includegraphics[width=0.9\linewidth]{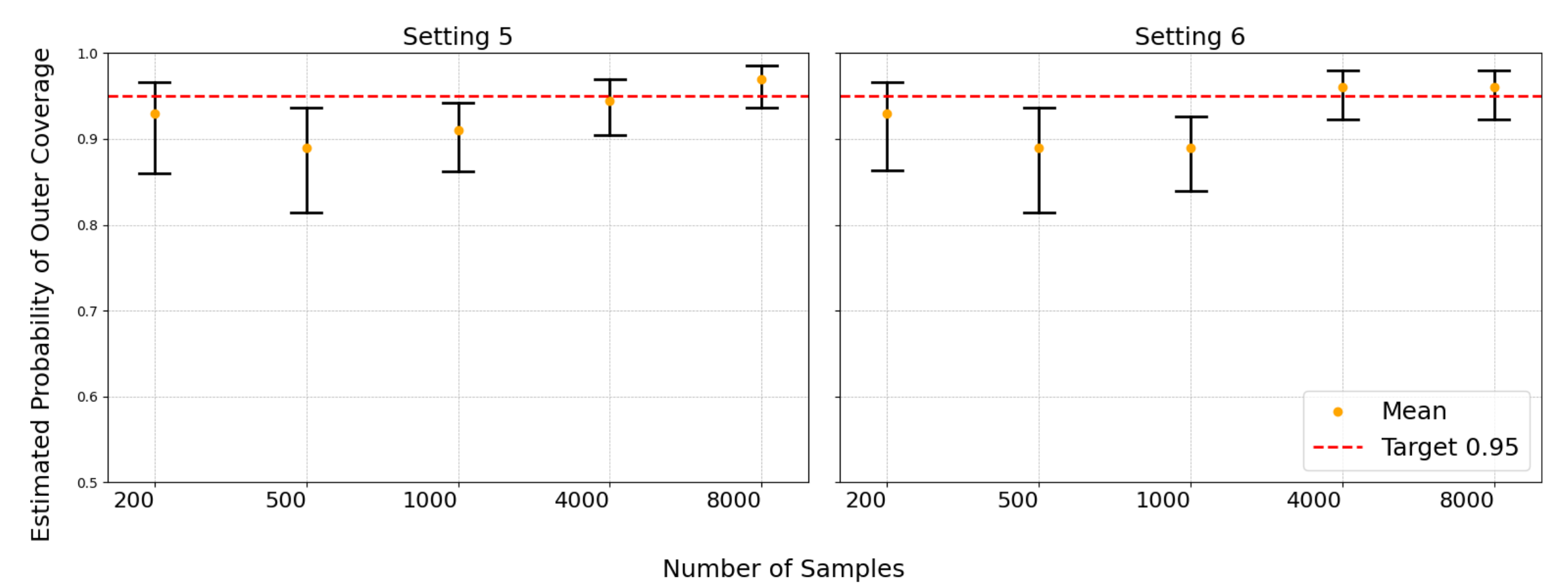}
    \end{subfigure}
    \caption{Empirical proportion of simulations where the estimated coverage $\mathrm{Pr}_{(W,T) \sim P_0} ( T > \hat{L}_n(W))$ is at least $\alpha$, along with a 95\% Wilson score confidence interval for this proportion. The red horizontal dashed line represents the desired confidence level, namely the proportion $1 - \beta = 0.95$.}
    \label{fig:cond}
\end{figure}

The empirical proportion of coverage being at least $1-\alpha$ is presented in Figure~\ref{fig:cond}. Despite some undercoverage in small-sample and high-dimensional cases, the empirical proportion for TCsurv steadily increases as the sample size grows, eventually reaching the desired level. This result 
aligns with
the asymptotic PAC validity of our method.

\subsection{Marginal coverage experiments}

Although our primary goal is to construct an asymptotic PAC prediction set, our method can also achieve asymptotic marginal coverage with a simple modification in the tuning parameter selection. The modified method and corresponding results are provided in Appendix~\ref{app:marg}, with its asymptotic marginal coverage following from the consistency property stated in Theorem~\ref{thm2}. Notably, our marginal performance demonstrates a strong double robustness property, in the sense that it achieves asymptotic marginal coverage even when one nuisance function is mis-specified. 

Next, we describe the experiments for verifying the asymptotic marginal coverage guarantee in Definition~1.1 of \citet{farina2025doubly}, which is also stated in Eq.~\eqref{eq:asymp_marg} in Appendix~\ref{def:guarantee}. For each setting, we generate \( 100 \) i.i.d. datasets. We compare our method with alternative existing distribution-free LPB methods for \(T\) (DFT):
\begin{itemize}
    \item \textbf{DFT-fixed} \citep{candes2023conformalized}: Conformalized LPB that constructs $\hat{L}$
    from calibration samples $(W_i, Y_i \land c_0)$ among those with $C_i \geq c_0$ such that $\mathrm{Pr}(T \land c_0 \geq \hat{L}(W)) \geq 1 - \alpha$. Here, $c_0$ is a threshold tuned via grid search: $c_0 = \arg\max_{c_0 \in C} n^{-1} \sum_{i \in I_{\mathrm{cal}}} \hat{L}(W_i)$. 
    
    \item \textbf{DFT-adaptive-T} \citep{gui2024conformalized}: Adaptive conformalized LPB. The candidate LPB is the estimated $\tau$-th conditional quantile of \(T\mid W = w\), given by $\hat{L}(w) = S_n^{-1}(1-\tau\mid w)$.
    
    \item \textbf{DFT-adaptive-CT} \citep{gui2024conformalized}: Adaptive conformalized LPB, where the candidate LPB is $\hat{L}(w) = \min( S_n^{-1}(1-\tau\mid w), G_n^{-1}(1/\log(n)\mid w))$.
\end{itemize}

\begin{figure}[t]
    \centering
    \begin{subfigure}{\linewidth}
        \centering
        \includegraphics[width=0.95\linewidth]{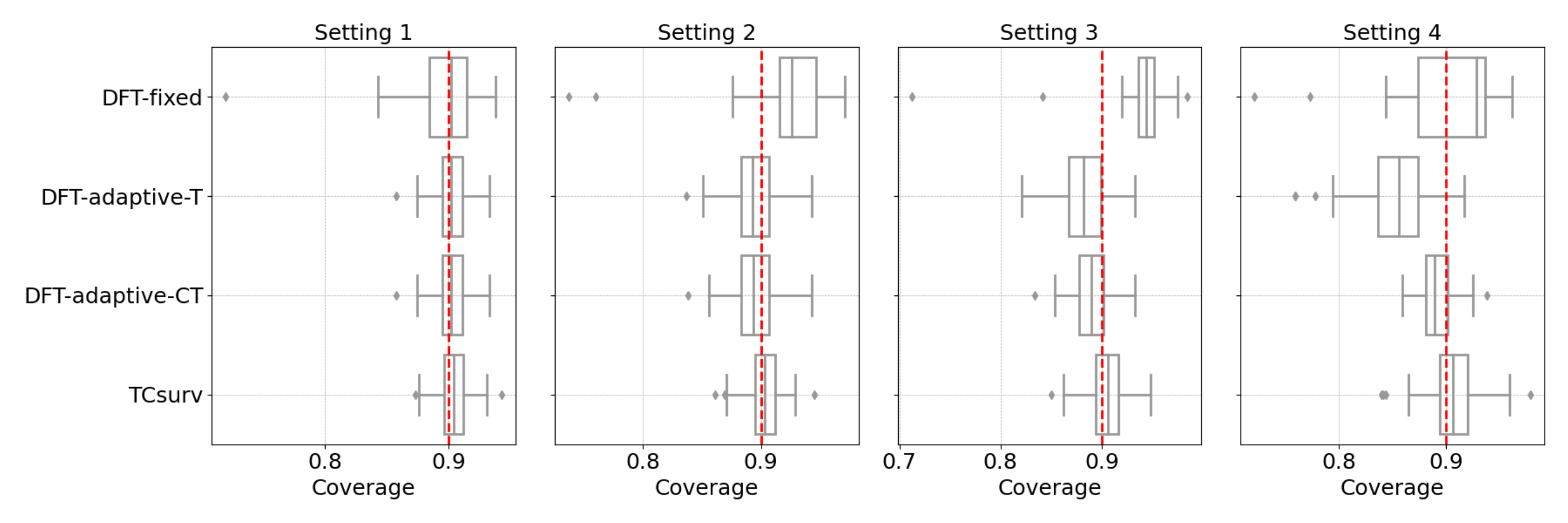}
    \end{subfigure}
    
    \vspace{1em} 

    \begin{subfigure}{\linewidth}
        \centering
        \includegraphics[width=0.95\linewidth]{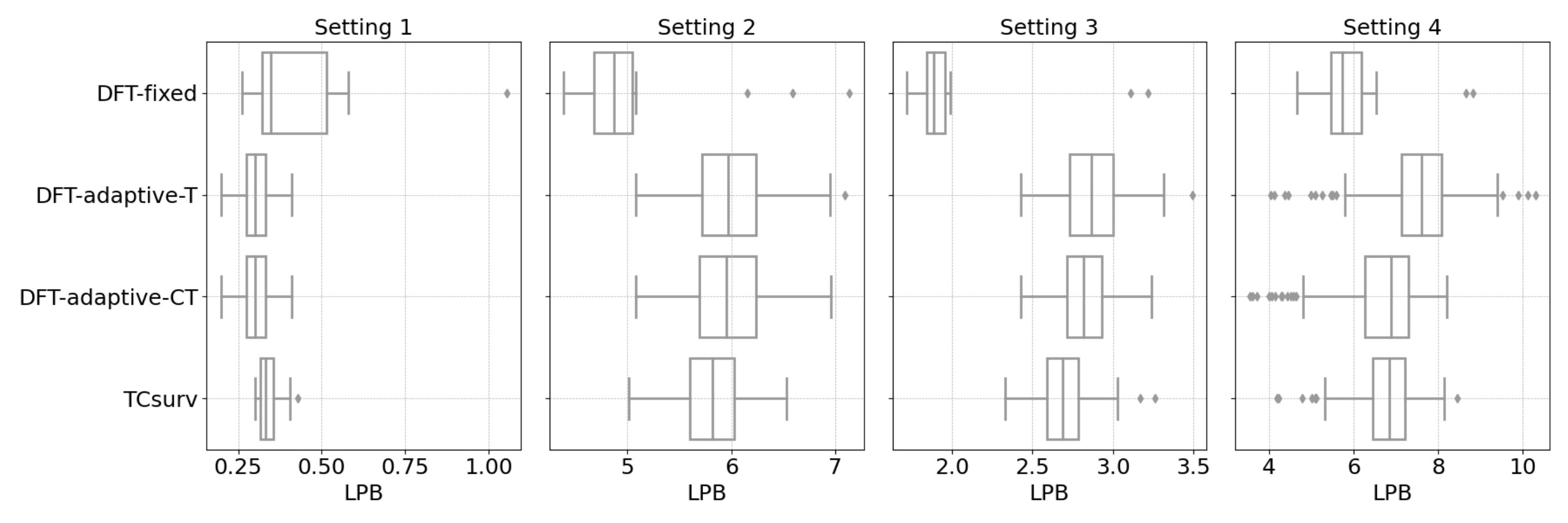}
    \end{subfigure}
    
    \caption{Empirical coverage (top) and average LPBs (bottom) of all candidate methods under settings 1–4, where $W$ is univariate. The boxplots show results from 100 independent draws of datasets. The dashed red line indicates the target coverage level of $1 - \alpha = 90\%$.
    }
    \label{fig:marg1}
\end{figure}

In all experiments, the algorithm for fitting $S_n$ is estimated using the \texttt{survSuperlearner} R package, while $G_n$ is estimated with the Cox proportional hazards model. Notably, all baseline methods require full access to the actual censoring time $C$, whereas our method does not. For a fair comparison, we present results where our method also has access to $C$ when estimating the censoring distribution. Additionally, as shown in Appendix~\ref{app:res2}, our method produces similarly efficient LPBs even when $C$ is not fully observed when $T$ precedes $C$. 

We also conducted experiments where $G_n$ was estimated using a Gaussian process (GP), with the results provided in Appendix~\ref{app:res1}. Notably, the GP is misspecified, as the learned mean and variance exhibit negligible variation across subgroups with different covariate values. The results show that, in this case, we still obtain robust coverage and tight LPBs.

\paragraph{Results.} 
We present the results of empirical coverage and average LPBs for all candidate methods under univariate settings in Figure~\ref{fig:marg1} and multivariate settings in Figure~\ref{fig:marg2}.
Across all settings, our method exhibits the most stable coverage performance at the target level of $1-\alpha$. In contrast, DFT-fixed tends to overcover due to the grid scale used for selecting the threshold $c_0$. While DFT-adaptive-T and DFT-adaptive-CT achieve the desired coverage level in the two homoscedastic univariate settings (Settings~1 and 2), their coverage deteriorates significantly in Settings~3 and 4.

In terms of average LPB, 
TCsurv is comparable to the other methods. Notably, in the most complex univariate Setting 4, 
our method is one of the only two methods that successfully maintain the desired coverage level, and it achieves a higher mean LPB, making it more efficient.
This is because, in these settings, the censoring distribution depends on the covariate in a complex manner and is poorly estimated.
TCsurv is doubly robust against poor estimation of one nuisance function, but the baseline methods are less robust.
Overall, our method is the only approach that consistently achieves desired coverage level while maintaining tight LPBs, demonstrating both its double robustness and efficiency.
\begin{figure}[h!]
    \centering
    \begin{subfigure}[b]{0.49\textwidth}
        \centering
        \includegraphics[width=\textwidth]{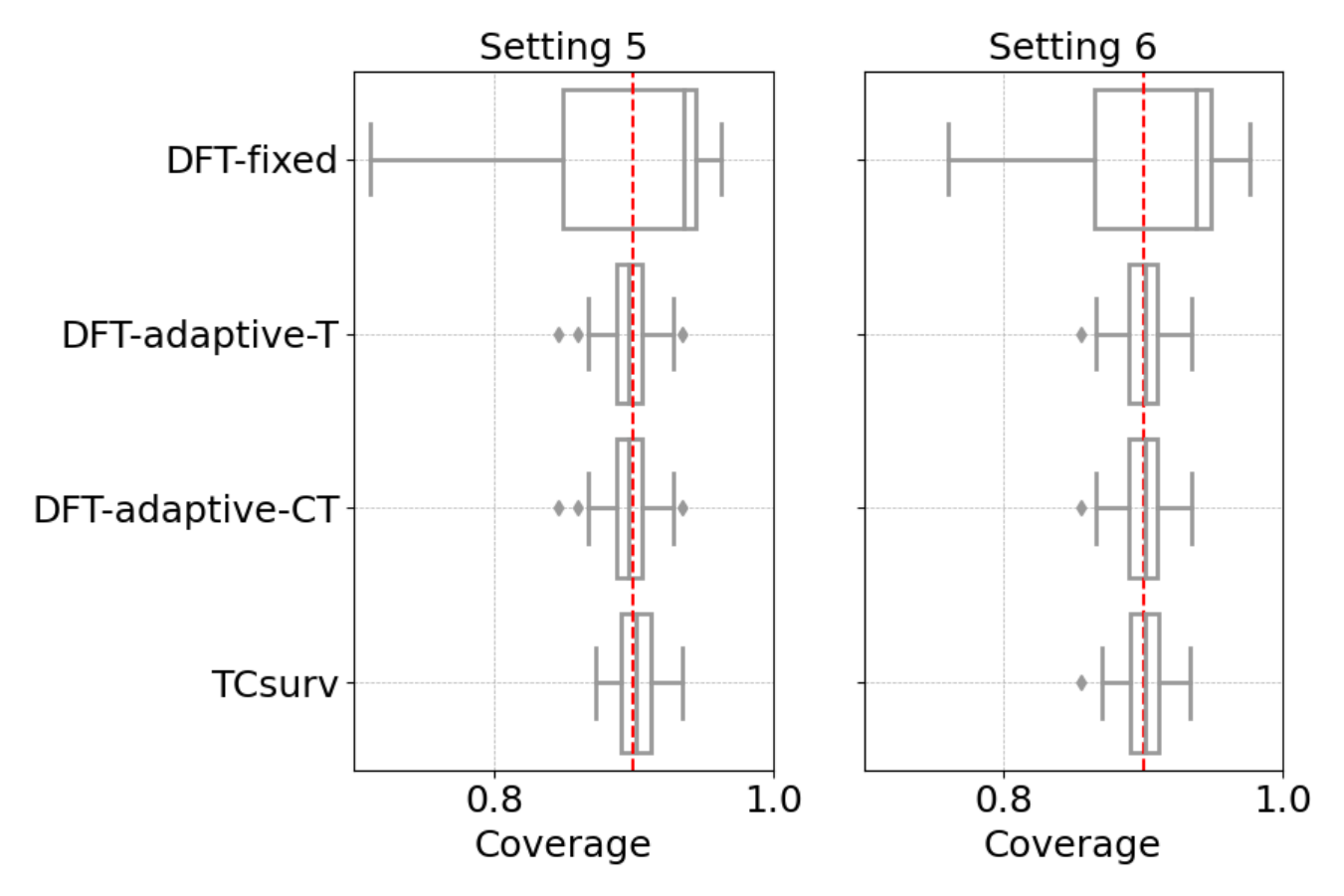}
    \end{subfigure}
    \hfill
    \begin{subfigure}[b]{0.49\textwidth}
        \centering
        \includegraphics[width=\textwidth]{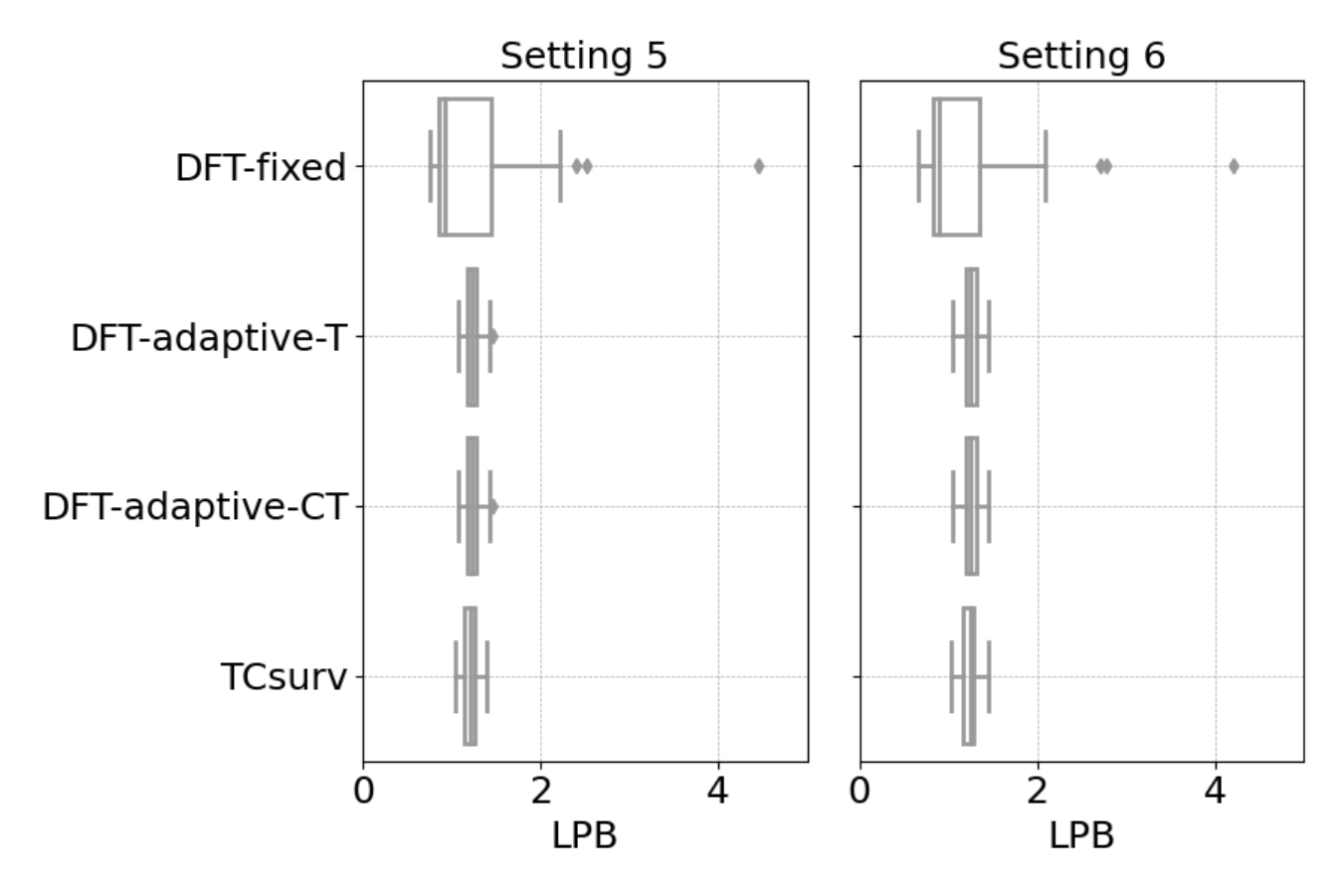}
    \end{subfigure}
    \caption{Boxplots of empirical coverage (left) and average LPBs (right) in the multivariate experimental settings. All other details are the same as in Figure~\ref{fig:marg1}.
    }
    \label{fig:marg2}
\end{figure}

\section{Application to mobile app data}\label{sec:6}

We apply our proposed method to a real-world mobile app dataset recording users' login activities.\footnote{Data downloaded from the \href{https://www.kaggle.com/datasets/bhuvanchennoju/mobile-usage-time-prediction}{Kaggle Users Active Time Prediction dataset}.} This dataset contains the timestamps of user activities for 2,476 users in a shared window of three weeks.
We follow the setup of \citet{gui2024conformalized} to predict the start time of a user's $K$-th active day, which is useful for targeted promotions. For example, companies may seek to determine the optimal timing for releasing advertisements to individual users, allowing them to promote their products while minimizing advertising costs.

Since the original data is right-censored, $T$ is not fully observed, leading to challenges for validation.
We therefore apply TCsurv to a further coarsened version of the data in our application to ensure that the original data can be used for validation.
We choose $K=9$ to ensure that 
the ground truth coverage rate in a held-out validation set can be computed.\footnote{That is, all users have at least $9$ active days during the experiments.}
The censored time $C$ is set to be $11$ or $12$ rather than those in the original data, depending on whether the user opened the app on the first day, and the data for constructing LPB is censored accordingly.
The resulting event rate is approximately $64\%$. 
We use three covariates for LPB: gender, age, and number of children. 

In our marginal experiments, we verify the guarantees using the same baselines as in the synthetic settings. We first apply a triple hold-out split, then sample with replacement within each part to obtain a training set, a calibration set, and a test set with the same size $n=1000$.

We then verify the training-set conditional guarantees using the same baselines as in the synthetic settings. We consider sample sizes of $n = 200$, $500$, and $1000$. The training-set conditional results are reported in Figure~\ref{fig:real} (right). Our method steadily improves as the sample size increases, eventually reaching the desired level. In conclusion, our method successfully achieves the asymptotic training-set conditional guarantee.

The marginal validity results are reported in Figure~\ref{fig:real} (left and middle). In this case, the difference between DFT-adaptive-T and DFT-adaptive-CT is negligible. All methods achieve 90\% coverage with similar average LPBs. Overall, our method achieves marginal coverage with a tight LPB.

\begin{figure}[h!]
    \centering
    \includegraphics[width=0.85\textwidth]{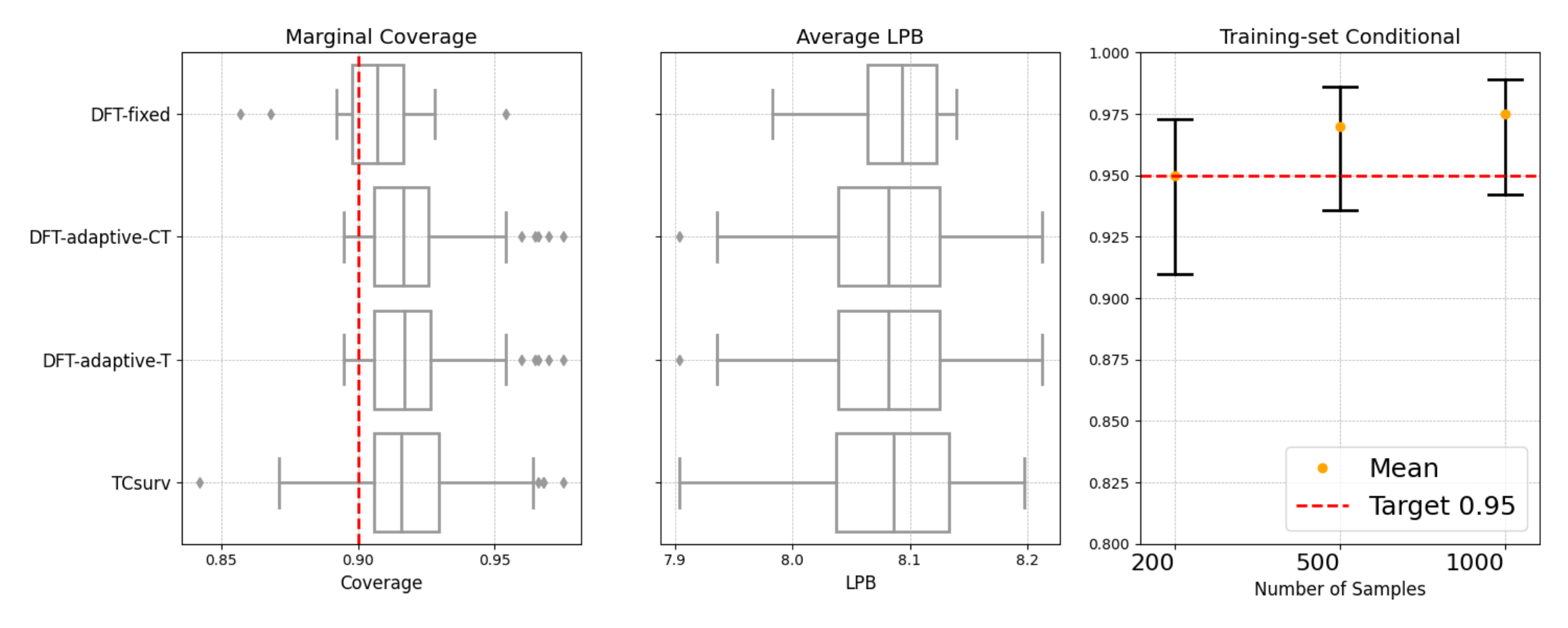}
    
    \caption{Empirical results on the mobile app activity dataset. 
Left: Empirical marginal coverage rate. 
Middle: Average LPBs for marginal performance. 
Right: Empirical proportion of runs where the estimated coverage error does not exceed $\alpha$, with a 95\% Wilson score confidence interval.}
    \label{fig:real}
\end{figure}

\section{Discussion}\label{sec:7} 

We develop a novel nonparametric method for constructing lower prediction bounds (LPBs) with large-sample training-set conditional coverage for survival times from right-censored data, extending recent methods designed for marginal coverage and Type I censoring. 
It achieves strong empirical performance, producing marginally and training-set conditionally valid LPBs that are both informative and robust compared to alternative methods. 
Because of the double robustness property, our method ensures consistency even in the presence of poor nuisance estimation, thereby enhancing its reliability in achieving marginal coverage in practical applications.

Many prior works establish PAC-variant guarantees that differ from ours, specifically the PAAC guarantee. \citet{gui2024conformalized} obtain this guarantee in the Type I censoring setting, while the concurrent work \citet{farina2025doubly} extends it to general right-censored data. The key distinction between APAC and PAAC lies in how they handle the asymptotically vanishing approximation error: APAC applies this approximation to the confidence level, whereas PAAC does so to the coverage error. APAC ensures that with a confidence level approaching to the target $1 - \beta$, the true coverage error remains within $\alpha$, though it may be slightly conservative. In contrast, PAAC guarantees that, with at least $1 - \beta$ confidence, the true coverage error does not significantly exceed $\alpha$, but small undercoverage may occur. 
APAC is particularly advantageous in scenarios where high-confidence control over coverage error is crucial, even at the cost of mild conservatism, making it especially valuable in safety-critical applications.

\citet{holmes2024two} recently proposed a method for also constructing upper prediction bounds. It is worth noting that our method can be conveniently modified to produce two-sided prediction intervals,
which we leave for future interest.
\acks{The authors thank Insup Lee for his helpful feedback and discussions on this work.
This work was supported by ARO W911NF-20-1-0080 and the NSF-NIH SCH grant.}


\bibliography{sample}

\newpage

\appendix
\section{Types of guarantees}\label{def:guarantee}
We say that $\hat{L}$ is a marginally calibrated lower prediction bound (LPB) \citep{candes2023conformalized, gui2024conformalized} at level $1 - \alpha$ if it satisfies 
\begin{equation*}
\mathrm{Pr}_{(W,T) \sim P_0} \Big(T \geq \hat{L}(W)\Big) \geq 1-\alpha.
\end{equation*}
Theorem~1 in \citet{candes2023conformalized} shows that finite-sample marginal calibration is rarely informative.
We thus consider an asymptotically 
marginally calibrated LPB.
We say that a sequence of LPB $\{\hat{L}_n\}_{n=1}^\infty$ is asymptotically 
marginally calibrated for $T$ at level $\alpha$
if \begin{equation}\label{eq:asymp_marg}
    \mathrm{Pr}_{(W,T) \sim P_0}(T \geq \hat{L}_n(W)) \geq 1 - \alpha - o_p(1) \text{ as } n \to \infty.
\end{equation}

In terms of training-set conditional coverage, \citet{gui2024conformalized, farina2025doubly} achieve an asymptotic PAC-variant guarantee other than APAC in Eq.~\eqref{eq: apac}:
\begin{equation}\label{eq:paac}
        \mathrm{Pr}_{\mathcal{D}} \left( \mathrm{Pr}_{(W,T) \sim P_0} \left(T \geq \hat{L}_n(W) \mid \mathcal{D} \right) \geq 1 - \alpha - o_p(1) \right) \geq 1 - \beta \text{ as } n \to \infty.
    \end{equation}
\citet{qiu2023prediction} called Eq.~\eqref{eq:paac} a probably asymptotically approximately correct (PAAC) guarantee in their Remark~3. 

\section{Asymptotic marginal validity}\label{app:marg}

In this section, we first provide a theorem establishing the consistency of our estimator $\Psi(\hat{P}_n; L_{n,\tau})$ under weaker conditions than those in Theorem~\ref{thm3}. Next, we present our method to select the threshold.

\subsection{Theorem}
We first introduce the following condition implied by Condition~\ref{cond: C5}.
Recall $S_\infty$ and $G_\infty$ defined in Condition~\ref{cond: C1}.
\begin{enumerate}[label=A\arabic*, start=6]
    \item Consistency of one nuisance estimator: $S_\infty=S_0$ \textit{or} $G_\infty=G_0$. 
\label{cond: C6}
\end{enumerate}
\begin{theorem}[Consistency]\label{thm2}
If Conditions~\ref{cond: C1}--\ref{cond: C2} and \ref{cond: C6}
hold, then for every $\tau \in (0, 1)$, $\Psi (\hat{P}_n;L_{n,\tau})\allowbreak \xrightarrow{P}\Psi (P_0;L_{n,\tau})$.
In addition, $\sup_{\tau \in [0,1]} |\Psi(\hat{P}_n; L_{n,\tau})- \Psi (P_0;L_{n,\tau})| = o_p(1)$.
\end{theorem}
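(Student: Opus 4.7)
The plan is to establish consistency of the one-step corrected estimator $\hat\psi_{n,\tau}$ (which, by Algorithm~\ref{alg}, equals $|I_{\mathrm{cal}}|^{-1}\sum_{i\in I_{\mathrm{cal}}} \phi(S_n, G_n; L_{n,\tau})(O_i)$) through the standard decomposition
\begin{equation*}
\hat\psi_{n,\tau} - \Psi(P_0; L_{n,\tau}) = (P_n^{\mathrm{cal}} - P_0)\bigl[\phi(S_n, G_n; L_{n,\tau})\bigr] + \bigl\{P_0[\phi(S_n, G_n; L_{n,\tau})] - \Psi(P_0; L_{n,\tau})\bigr\}.
\end{equation*}
Sample splitting makes $S_n, G_n$, and hence $L_{n,\tau}$, independent of the calibration observations, so conditional on $\mathcal{D}_{\mathrm{train}}$ the first summand is a centered average of i.i.d.\ random variables that are uniformly bounded by virtue of Condition~\ref{cond: C2} (which enforces $1/G_n \leq \eta$). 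The conditional weak law of large numbers then yields $(P_n^{\mathrm{cal}} - P_0)[\phi(S_n, G_n; L_{n,\tau})] = o_p(1)$ pointwise in $\tau$.

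The bias term is the substantive step, and it is exactly where the double robustness of the efficient influence function enters. A direct expansion analogous to the calculations in \citet{hubbard2000nonparametric, westling2023inference}, using the martingale-type identity underlying Theorem~\ref{thm: eif}, shows that for any $(S, G, L)$,
\begin{equation*}
P_0[\phi(S, G; L)] - \Psi(P_0; L) = E_0\!\left[S(L(W)\mid W)\int_{(0, L(W)]}\!\!\left\{\frac{G_0(u\mid W)}{G(u\mid W)} - 1\right\}\!\left(\frac{S_0}{S} - 1\right)\!(du\mid W)\right],
\end{equation*}
which, up to an absolute value, is precisely the remainder $r_{n,\tau}$ appearing in Condition~\ref{cond: C4} when $(S,G) = (S_n, G_n)$. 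Because this integrand factors into a product of $(G - G_0)$- and $(S - S_0)$-type errors, it vanishes identically whenever $S = S_0$ \emph{or} $G = G_0$. Applying Cauchy--Schwarz with the $L^2$-type consistency in Condition~\ref{cond: C1} then gives $r_{n,\tau} = o_p(1)$ under Condition~\ref{cond: C6}: whichever of $S_\infty = S_0$ or $G_\infty = G_0$ holds, the corresponding factor tends to zero in $L^2(P_0)$, while the other stays bounded by Conditions~\ref{cond: C1}--\ref{cond: C2}.

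For the uniform statement $\sup_{\tau\in[0,1]} |\hat\psi_{n,\tau} - \Psi(P_0; L_{n,\tau})| = o_p(1)$, the key structural observation is that $\tau \mapsto L_{n,\tau}(w)$ is monotone non-decreasing by~\eqref{eq:L_tau}, so pointwise in $o$ the map $\tau \mapsto \phi(S_n, G_n; L_{n,\tau})(o)$ has bounded variation uniformly in $o$. Conditional on $\mathcal{D}_{\mathrm{train}}$, this places the class $\{\phi(S_n, G_n; L_{n,\tau}) : \tau \in [0,1]\}$ in the bracketing Glivenko--Cantelli regime, upgrading the empirical-process piece to uniform convergence; a parallel grid argument transfers the pointwise bias control to a sup via monotonicity of $\tau \mapsto \Psi(P_0; L_{n,\tau})$. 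The main obstacle I anticipate is the uniform-in-$\tau$ control of $r_{n,\tau}$, since the integration region $(0, L_{n,\tau}]$ itself depends on $\tau$ and on the random nuisance estimators; the $\sup_{u\in[0,t]}$ form of Condition~\ref{cond: C1}, together with the $G_n^{-1}(\eta_2\mid w)$ regularization in~\eqref{eq:L_tau} that uniformly bounds $L_{n,\tau}$ away from the noninformative tail, is the mechanism that makes this uniform control tractable.
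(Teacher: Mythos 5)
Your high-level architecture is close to the paper's: you split the estimator into a centered empirical term plus a bias term, and you identify the bias with the doubly robust product form, which is exactly the paper's Lemma~\ref{lemma1} combined with the Duhamel equation; your conditional-WLLN treatment of the centered term (using boundedness from Condition~\ref{cond: C2} and sample splitting) is fine for the pointwise claim. The genuine gap is in how you kill the bias term under Condition~\ref{cond: C6}. The bias is evaluated at the \emph{estimators} $(S_n,G_n)$, while \ref{cond: C6} only says one of the \emph{limits} equals the truth, so "it vanishes identically whenever $S=S_0$ or $G=G_0$" does not apply directly, and your "Cauchy--Schwarz, small factor times bounded factor" step is not symmetric in the two cases. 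If $G_\infty=G_0$, the argument can be repaired: the function $u\mapsto G_0(u\mid w)/G_n(u\mid w)-1$ is small in sup-norm (in $L^2(P_0)$ over $W$) by \ref{cond: C1}, and the other factor is a signed measure whose total variation, after multiplying by $S_n(L_{n,\tau}(W)\mid W)$, is bounded by a constant (it is the differential of a product of monotone $[0,1]$-valued functions), so a sup-norm-times-total-variation bound works — though this is Hölder-type, not Cauchy--Schwarz. But if $S_\infty=S_0$ and $G_\infty\neq G_0$, the "small" object is the signed measure $\left(\tfrac{S_0}{S_n}-1\right)(du\mid W)$; Condition~\ref{cond: C1} gives only uniform (distribution-function) convergence of the ratios $S_n(t\mid W)/S_n(u\mid W)$, not smallness in total variation, and the other factor $G_0/G_n-1$ is bounded but does \emph{not} vanish. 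Integrating a non-vanishing bounded function against a measure that is only weakly null is not controlled by Cauchy--Schwarz; you would need an additional integration-by-parts step exploiting that $u\mapsto G_0(u\mid w)/G_n(u\mid w)$ has uniformly bounded variation under \ref{cond: C2}, which you never invoke.

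The paper sidesteps this entirely by a different centering: it adds and subtracts $\phi(S_\infty,G_\infty;L_{n,\tau})$ (Lemma~\ref{lemma2}). Under \ref{cond: C6} the bias at the \emph{limit} nuisances is exactly zero, because one factor in the product form is identically zero — no approximation argument is needed there — and the remaining discrepancy $P_0(\phi_{n,\tau}-\phi_{\infty,n,\tau})$ is bounded in $L^2(P_0)$ by Lemma~\ref{lemma3}, whose term-by-term analysis (in particular the $U_{5,n,\tau}$ term, handled via the backwards equation and integration by parts) is precisely the work your step glosses over, and requires only convergence of $(S_n,G_n)$ to the limits, with no rate and no double use of consistency. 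Relatedly, your uniform-in-$\tau$ argument is only a sketch: $\tau\mapsto\phi_{n,\tau}(o)$ is not monotone in $\tau$, so the bracketing Glivenko--Cantelli claim needs $\phi_{n,\tau}$ decomposed into bounded monotone-in-$\tau$ pieces, and uniform control of the bias over $\tau$ needs sup-form bounds of the type in Lemma~\ref{lemma3}; the paper instead runs a Donsker/BUEI argument for the class indexed by $\tau$. The route you propose could likely be completed, but as written the bias step in the $S$-consistent, $G$-inconsistent case is a real hole rather than a routine omission.
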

We provide the proof in Section~\ref{app:thm}.

\subsection{Selection of threshold}\label{app:marg_threshold}

We select a threshold \( \hat{\tau}_n \) to ensure that the size of prediction sets is small and the estimated coverage is at least $1-\alpha$:
\begin{equation*}
    \bar{\tau}_n := \max \left\{ \tau \in \mathcal{T}_n : \hat{\psi}_{n,\tau'} \geq 1- \alpha, \quad \forall \tau' \in \mathcal{T}_n \text{ such that } \tau' \leq \tau \right\},
\end{equation*}
where $\mathcal{T}_n$ is a set of tuning parameter candidates specified by the user.
The asymptotic marginal validity of our method follows directly from the consistency property in Theorem~\ref{thm2}.
In our simulations and real data application, we set $\mathcal{T}_n$ to be a grid uniformly spaced within the interval $[0,1)$.

\section{Proof of theoretical results}\label{app:seca}

We use the following notation throughout the following sections.
For any distribution $P$ and any function $f$, let $P f := \int f d P$ and $\mathbbm{G}_n f := \sqrt{n} (P_n - P_0) f$ where $P_n$ denotes the empirical distribution in the calibration data.
In particular, $P_n f = n^{-1} \sum_{i \in I_{\mathrm{cal}}} f(O_i)$.
We use $\|f\|_{2,P_0}$ to denote the $L_2(P_0)$-norm of $f$, namely $(\int f^2 dP)^{1/2}$.

\subsection{Derivation of EIF}\label{app:eif}

The following proposition shows that, under Conditions~\ref{id: cond1} and \ref{id: cond2}, the survival function $S_0$ is estimable by showing that $S_0$ can be represented in terms of the observed right-censored data.
By switching the role of $T$ and $C$, we can show that $G_0$ is also estimable.
\begin{proposition}
\label{prop}
Denote $F_{0,1}(t \mid w) := P_0(Y \leq t, \Delta = 1 \mid W = w), \quad
R_0(t \mid w) := P_0(Y \geq t \mid W = w)$, and 
$\Lambda_0(t \mid a, w) := \int_{(0,t]} \frac{F_{0,1}(du \mid a, w)}{R_0(u \mid a, w)}$. If Conditions~\ref{id: cond1} and \ref{id: cond2} hold for some  \(t_0 \in (0, \infty)\), we have 
$$S_0(v \mid w) = \Prodi_{(0,v]} \left(1-\Lambda_0(du \mid w)\right)$$
where $\Prodi$ denotes the Riemann-Stieltjes product integral \citep{gill1990survey}.
\end{proposition}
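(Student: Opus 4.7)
The plan is to leverage Condition~\ref{id: cond1} to express both $R_0$ and $F_{0,1}$ in terms of the conditional survival functions $S_0$ and $G_0$, so that $\Lambda_0(t\mid w)$ reduces to the conditional cumulative hazard of $T$ given $W=w$ on $[0,t_0]$. The desired representation then follows from the classical product-integral identity relating a survival function to its cumulative hazard.

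The first step is to establish, on $(0,t_0]$, the factorizations
\begin{align*}
R_0(u\mid w) &= S_0(u^-\mid w)\,G_0(u^-\mid w), \\
F_{0,1}(du\mid w) &= G_0(u^-\mid w)\,P_0(T\in du\mid W=w).
\end{align*}
For the first, $R_0(u\mid w) = P_0(T\geq u,\,C\geq u\mid W=w)$ factorizes directly by conditional independence. For the second, I would decompose $\{Y\in du,\Delta=1\}$ as $\{T\in du,\,C\geq u\}$ and again apply conditional independence; note that Condition~\ref{id: cond1} is stated for the truncated variables $T\wedge t_0$ and $C\wedge t_0$, but this is exactly what is needed since we only work on $[0,t_0]$. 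Substituting into the definition of $\Lambda_0$ and invoking Condition~\ref{id: cond2} to justify the division then yield
\[
\Lambda_0(t\mid w) \;=\; \int_{(0,t]} \frac{P_0(T\in du\mid W=w)}{S_0(u^-\mid w)},
\]
which is precisely the conditional cumulative hazard of $T$ given $W=w$.

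With $\Lambda_0$ identified as the conditional cumulative hazard, the conclusion is a direct consequence of the standard Duhamel-type identity $S(v)=\Prodi_{(0,v]}(1-\Lambda(du))$ \citep{gill1990survey}, valid for arbitrary càdlàg survival functions that may have both continuous and discrete components. The main obstacle is careful bookkeeping at atoms of $S_0$ or $G_0$: one must use the left limits $S_0(u^-\mid w)$ and $G_0(u^-\mid w)$ in the factorization, since $\{Y\in du,\Delta=1\}$ corresponds to $\{T=u,\,C\geq u\}$ rather than $\{T=u,\,C>u\}$. Once this is handled, together with Condition~\ref{id: cond2} ensuring strict positivity of the denominator on the support of $F_{0,1}(\cdot\mid w)$, the argument is essentially the classical derivation underlying the Kaplan--Meier representation of the conditional survival function from right-censored data.
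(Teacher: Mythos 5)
Your proposal is correct and follows essentially the same route as the paper: factorize $R_0(u\mid w)=S_0(u^-\mid w)G_0(u^-\mid w)$ and $F_{0,1}(du\mid w)=G_0(u^-\mid w)\,P_0(T\in du\mid W=w)$ via Condition~\ref{id: cond1}, cancel $G_0(u^-\mid w)$ using the positivity from Condition~\ref{id: cond2} to identify $\Lambda_0$ with the conditional cumulative hazard of $T$, and conclude with the product-integral representation from \citet{gill1990survey}. The only quibble is nomenclature: the identity $S(v)=\Prodi_{(0,v]}(1-\Lambda(du))$ you invoke is the product-integral/hazard representation (Theorem~11 of \citet{gill1990survey}), not the Duhamel equation, which is a different identity used elsewhere in the paper.
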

\begin{proof}[Proof of Proposition \ref{prop}]
Recall
\[
S_0(L(w) \mid w) = P_0(T > L(w) \mid W = w).
\]
Since \(T\) is a positive random variable, by Theorem 11 of \citet{gill1990survey} we have
\begin{equation}\label{eq:s}
    S_0(L(w) \mid w) = P_0(T > L(w) \mid W = w) = \Prodi_{u \in (0,L(w)]} \left\{1 - \Lambda_{0}(du \mid w)\right\}.
\end{equation}
for 
\begin{equation}\label{eq:lambda_int}
    \Lambda_{0}(v \mid w) := -\int_{(0,v]} \frac{S_0(du \mid w)}{S_0(u^- \mid w)}.
\end{equation}
Now, by the definition of \(Y\), \(T\), and \(C\), we have
\begin{align*}
    R_0(t \mid w) &= P_0(Y \geq t \mid W = w) = P_0(T \geq t, C \geq t \mid W = w)
\end{align*}
for all \(t\). By Condition \ref{id: cond1}, we thus have
\begin{align*}
    R_0(t \mid w) &= P_0(T \geq t \mid W = w) P_0(C \geq t \mid W = w) \\
    &= S_0(t^- \mid w) G_{0}(t^- \mid w)
\end{align*}
for each \(t \in (0, t_0]\), recall
\[
G_{0}(t \mid w) := P_0(C > t \mid W = w).
\]
We also have
\begin{align*}
    F_{0,1}(t \mid w) &= P_0(Y \leq t, \Delta = 1 \mid W = w) \\
    &= P_0(\min\{T, C\} \leq t, T \leq C \mid W = w) \\
    &= \int_{u \in (0,t]} \int_{v \geq u} P_0(du, dv \mid W = w),
\end{align*}
where \((u, v) \mapsto P_0(u, v \mid W = w) := P_0(T \leq u, C \leq v \mid W = w)\) is the joint distribution function of \(T\) and \(C\) given \(W = w\). By Condition \ref{id: cond1}, we have
\begin{align*}
    P_0(u, v \mid W = w) &= P_0(T \leq u, C \leq v \mid W = w) \\
    &= P_0(T \leq u \mid W = w) P_0(C \leq v \mid W = w) \\
    &= \left[1 - S_0(u \mid w)\right] \left[1 - G_{0}(v \mid w)\right]
\end{align*}
for each \(u, v \in (0, t_0)\). It follows that
\[
F_{0,1}(t \mid w) = -\int_{u \in (0,t]} G_{0}(u^- \mid w) S_0(du \mid w)
\]
for each \(t \in (0, t_0)\). 
Now, we note that Condition \ref{id: cond2} implies that \(G_{0}(t^- \mid w) > 0\) for \(P_0\)-almost every \(w\) and all \(t \in (0, t_0)\). Therefore, for each \(L(w) \in (0, t]\),
\begin{align}\label{eq:lambda}
\Lambda_{0}(L(w) \mid w) &= -\int_{(0,L(w)]} \frac{S_0(du \mid w)}{S_0(u^- \mid w)} 
= -\int_{(0,L(w)]} \frac{G_{0}(u^- \mid w) S_0(du \mid w)}{G_{0}(u^- \mid w) S_0(u^- \mid w)} \nonumber\\
&= \int_{(0,L(w)]} \frac{F_{0,1}(du \mid w)}{R_0(u \mid w)}.
\end{align}
Combining Eq. \eqref{eq:s} and Eq. \eqref{eq:lambda} completes the proof.   
\end{proof}

\begin{proof}[Proof of Theorem~\ref{thm: eif}]
Note that our target functional can be written in the form of Riemann-Stieltjes product integral \citep{gill1990survey}
\begin{align*}
    \Psi(P_0; L) &= P_{0}(T > L(W))= E_0(E_0(I(T > L(W) \mid W)) = E_0(S_{0}(L(W) \mid W)) \\
    &= E_0 \left( \Prodi_{(0,L(w)]} [1 - \Lambda_0(du \mid w)] \right)
\end{align*}
for $\Lambda_0(t \mid w) = - \int_{(0, t]} \frac{S_0(du \mid w)}{S_0(u^- \mid  w)}$ defined in Eq.~\eqref{eq:lambda_int}. 

Let $Q_0$ denote the marginal distribution of $W$ as implied by $P_0$. 
Let $\{P_\epsilon: \epsilon\}$ be a regular one-dimensional parametric submodel indexed by parameter $\epsilon \in \mathbbm{R}$, through $P_0$ at $\epsilon=0$ in a nonparametric model for the joint distribution of $(W,\Delta,Y)$. We use subscript $\epsilon$ to denote the induced components of $P_\epsilon$ similar to those of $P_0$. For example, we use $Q_\epsilon$ to denote the induced marginal distribution of $W$ under $P_\epsilon$. Let
$\dot{\ell}_{\epsilon}(w) = \frac{\partial}{\partial\epsilon}\log \{dQ_{\epsilon}/dQ_0\}(w)$ denote the score function
for $Q_\epsilon$. 
Thus under appropriate boundedness conditions, 
\begin{align}
    \left. \frac{\partial}{\partial\epsilon}\Psi(P_\epsilon) \right|_{\epsilon=0} &= \left.\frac{\partial}{\partial\epsilon}\int S_{\epsilon}(L(w) \mid w) \, dQ_{\epsilon}(w)\right|_{\epsilon = 0} \nonumber \\
    &= \left.\int \frac{\partial}{\partial\epsilon}S_{\epsilon}(L(w) \mid w)\right|_{\epsilon = 0} \, dQ_0(w) + \int S_0(L(w) \mid w) \,\dot{\ell}_0(w) dQ_0(w). \label{eq:full}
\end{align} 
By definition, the integrand in the first term is
\begin{equation*}
    \left.\frac{\partial}{\partial\epsilon}\Prodi_{(0,L(w)]} \{1 - \Lambda_{\epsilon}(du \mid w)\} \right|_{\epsilon = 0}.
\end{equation*}
By Theorem 8 in \citep{gill1990survey},  
\begin{equation*}
    \left.\frac{\partial}{\partial\epsilon}\Prodi_{(0,L(w)]} \{1 - \Lambda_{\epsilon}(du \mid w)\} \right|_{\epsilon = 0} = -S_0(L(w) \mid w) \left.\int_{(0,L(w)]} \frac{S_0(u^-\mid w)}{S_0(u\mid w)} \frac{\partial}{\partial\epsilon} \Lambda_{\epsilon} (du \mid w)\right|_{\epsilon = 0}.
\end{equation*}
By Eq.~\eqref{eq:lambda},
\begin{align*}
&\frac{\partial}{\partial \epsilon} \Lambda_\epsilon(t \mid w) \Bigg|_{\epsilon = 0} = 
\frac{\partial}{\partial \epsilon} \int_{(0,t]} R_\epsilon(u \mid  w)^{-1} F_{\epsilon, 1}(du \mid  w) \Bigg|_{\epsilon = 0} \\&= 
\int_{(0,t]} R_0(u \mid  w)^{-1} \frac{\partial}{\partial \epsilon} F_{\epsilon, 1}(du \mid  w) \Bigg|_{\epsilon = 0} 
- \int_{(0,t]} \frac{\partial}{\partial \epsilon} R_\epsilon(u \mid  w) \Bigg|_{\epsilon = 0} R_0(u \mid  w)^{-2} F_0(du \mid  w).
\end{align*}
Then
\begin{equation*}
    \frac{\partial \Lambda_\epsilon(du \mid  w)}{\partial \epsilon} \Bigg|_{\epsilon=0} = \frac{\left.\frac{\partial}{\partial \epsilon} F_{\epsilon,1}(du \mid w)\right|_{\epsilon=0}}{R_0(u \mid w)} - \frac{\left.\frac{\partial}{\partial \epsilon} R_\epsilon(u \mid w)\right|_{\epsilon=0} F_0(du \mid w)}{ R_0(u \mid w)^2}.
\end{equation*}
In addition, let $\dot{\ell}_\epsilon(\delta,y \mid w)$ denote the score function for the induced distribution of $(\Delta,Y) \mid W$ under the submodel $P_\epsilon$, that is, the gradient of the Radon-Nikodym derivative (with respect to $\epsilon$) of the conditional distribution of $(\Delta,Y) \mid W=w$ under $P_\epsilon$ relative to that under $P_0$. We have that
\begin{align*}
    \frac{\partial}{\partial \epsilon} F_{\epsilon,1}(u \mid w) \Bigg|_{\epsilon=0} &= 
\frac{\partial}{\partial \epsilon} P_{\epsilon}(Y \leq u, \Delta = 1 \mid  W = w) \Bigg|_{\epsilon=0} \\&= 
\frac{\partial}{\partial \epsilon} \iint I(y \leq u, \delta = 1) P_{\epsilon}(dy, d\delta \mid w) \Bigg|_{\epsilon=0} \\&=
\int \int I(y \leq u, \delta = 1) \dot{\ell}_0(y, \delta \mid w) P_0(dy, d\delta \mid w).
\end{align*}
Thus we have
\begin{align*}
    \frac{\partial}{\partial \epsilon} F_{\epsilon, 1}(du \mid w) \Bigg|_{\epsilon=0} &= \int_\delta I(\delta = 1) \dot{\ell}_0(u, \delta \mid w) P_0(du, d\delta \mid w)\\
    \frac{\partial}{\partial \epsilon} R_\epsilon(u \mid w) \Bigg|_{\epsilon=0} &= \iint I(y \geq u) \dot{\ell}_0(y, \delta |  w) P_0(dy, d\delta |  w),
\end{align*}
where $\int_x$ denotes integration with respect to variable $x$.

Putting these together, the first term in Eq.~\eqref{eq:full}
is 
\begin{align*}
    &\left.\frac{\partial}{\partial\epsilon}\int_w\Prodi_{(0,L(w)]} \{1 - \Lambda_{\epsilon}(du \mid w)\} dQ_0(w)\right|_{\epsilon = 0}\\
    =&\left.\int_w \frac{\partial}{\partial\epsilon}\Prodi_{(0,L(w)]} \{1 - \Lambda_{\epsilon}(du \mid w)\}\right|_{\epsilon = 0} dQ_0(w)\\
    =&\int_w -S_0(L(w) | w) \int_{(0,L(w)]}\frac{S_0(u^-\mid w)}{S_0(u\mid w)R_0(u \mid w)}  \int_\delta I(\delta = 1) \dot{\ell}_0(u, \delta \mid w) P_0(du, d\delta \mid w)dQ_0(w)\\
    -&\int S_0(L(w)| w) \int_{(0,L(w)]}\frac{S_0(u^-\mid w)}{S_0(u\mid w)}\frac{F_0(du \mid w)}{ R_0(u \mid w)^2}  \int_y \int_\delta I(y \geq u) \dot{\ell}_0(y, \delta |  w) P_0(dy, d\delta |  w)dQ_0(w)\\
    =&\int_w\int_{y}\int_{\delta}  -I(y \leq L(w), \delta = 1) \frac{S_0(L(w)\mid w) S_0(y^- \mid w)}{S_0(y \mid w) R_0(y \mid w)} {\dot{\ell}_0(y, \delta \mid w)}  P_0(dy, d\delta \mid w) \, dQ_0(w)\\
    &+\int_w\int_y\int_u\int_\delta  I(u \leq L(w), u \leq y) \cdot \\&\quad\quad\quad \frac{S_0(L(w) \mid w) S_0(u^- \mid w)}{S_0(u \mid w) R_0(u \mid w)^2}{\dot{\ell}_0(y, \delta \mid w)}  P_0(dy, d\delta \mid w)F_0(du\mid w) \, dQ_0(w)\\
    =&\iiiint  -I(y \leq L(w), \delta = 1) \frac{S_0(L(w)\mid w) S_0(y^- \mid w)}{S_0(y \mid w) R_0(y \mid w)} {\dot{\ell}_0(y, \delta \mid w)}  P_0(dy, d\delta \mid w) \, dQ_0(w)
    \end{align*}
    \begin{align*}
    &+\iiint S_0(L(w) \mid w) \int_{(0,L(w)\wedge y]}  \frac{ S_0(u^- \mid w) F_0(du\mid w)}{S_0(u \mid w) R_0(u \mid w)^2}{\dot{\ell}_0(y, \delta \mid w)}  P_0(dy, d\delta \mid w) \, dQ_0(w)\\
    =&E_0 \left[ S_0(L(W)\mid W ) \{ H_0(L(W) \wedge Y, W ) - \frac{I(Y \leq L(W), \Delta = 1)S_0(Y^- | W )}{S_0(Y |W )R_0(Y | W )} \} \dot{\ell_0}(Y, \Delta |W ) \right]
\end{align*}
where $H_0(v, w) := \int_{(0,v]} \frac{S_0(u^- |w) F_0(du | w)}{ S_0(u | w) R_0(u | w)^2}
$.

Now, we observe that:
\[
E_0\left[ I(Y \leq t, \Delta = 1) \frac{S_0(Y^{-} \mid W)}{S_0(Y \mid W) R_0(Y \mid W)} \,\middle|\, W = w \right] = 
\int_{(0,t]} \frac{S_0(y^{-} \mid w)}{S_0(y \mid w) R_0(y \mid w)} F_0(dy \mid w).
\]

Additionally,
\begin{align*}
    E_0\left[ H_0(t \wedge Y, W) \,\middle|\, W = w \right] =&
\int \int_{(0,t]} I(u \leq y) \frac{S_0(u^{-} \mid w)}{S_0(u \mid w) R_0(u \mid w)^2} F_0(du \mid w) P_0(dy \mid w)\\
=&\int_{(0,t]} \frac{S_0(u^{-} \mid w)}{S_0(u \mid w) R_0(u \mid w)} F_0(du \mid w),
\end{align*}
since by definition \( P_0(Y \geq u \mid W = w) = R_0(u \mid w) \).

Therefore:
\[
E_0\left[ H_0(t \wedge Y, W) - I(Y \leq t, \Delta = 1) \frac{S_0(Y^{-} \mid W)}{S_0(Y \mid W) R_0(Y \mid W)} \,\middle|\, W \right] = 0 \quad P_0\text{-almost surely}.
\]
Let $\dot{\ell}_\epsilon(y,\delta,w) := \frac{\partial}{\partial \epsilon} \log \frac{P_\epsilon}{P_0}(y,\delta,w)$ denote the score function for $P_\epsilon$.
By the definition of the score function and that the joint distribution of $(W,Y,\Delta)$ can be written as the product of the marginal distribution of $W$ and the conditional distribution of $(Y,\Delta) \mid W$, we have that
\begin{align*}
    \dot{\ell}_0(y, \delta\mid w) &= \frac{\partial}{\partial\epsilon}\log \frac{P_\epsilon (dy, d\delta\mid w)}{P_0(dy,d\delta \mid w)}(y,\delta)\bigg|_{\epsilon=0} = \frac{\partial}{\partial \epsilon} \left\{\log \frac{d P_\epsilon}{d P_0}(y,\delta,w) - \log \frac{d Q_\epsilon}{d Q_0}(w) \right\} \bigg|_{\epsilon=0} \\
    &= \dot{\ell}_0(y, \delta, w) - \dot{\ell}_0(w)
\end{align*}
and that $E[\dot{\ell}_0(Y,\Delta,W) | W = w] = \frac{\partial}{\partial \epsilon} \log \frac{Q_{\epsilon}}{Q_0}(w)\}\big|_{\epsilon=0}$.
Combing with the Tower rule, 
\begin{align*}
    &\left.\frac{\partial}{\partial\epsilon}\int\Prodi_{(0,L(w)]} \{1 - \Lambda_{\epsilon}(du \mid w)\} dQ_0(w)\right|_{\epsilon = 0} \\
    =&E_0 \left[ S_0(L(W) | W ) \{ H_0(L(W) \wedge Y, W ) - \frac{I(Y \leq L(W), \Delta = 1)S_0(Y^- | W )}{S_0(Y |W )R_0(Y | W )} \} \dot{\ell_0}(Y, \Delta, W ) \right] \\
    -&E_0\left[E_0 \left[ S_0(L(W) | W ) \{ H_0(L(W) \wedge Y, W ) - \frac{I(Y \leq L(W), \Delta = 1)S_0(Y^- | W )}{S_0(Y |W )R_0(Y | W )} \} \dot{\ell}_0(W)| W \right]\right]\\
    =&E_0 \left[ S_0(L(W)  | W ) \{ H_0(L(W)  \wedge Y, W ) - \frac{I(Y \leq L(W) , \Delta = 1)S_0(Y^- | W )}{S_0(Y |W )R_0(Y | W )} \} \dot{\ell_0}(Y, \Delta, W ) \right] \\
    &-E_0 \left[ 0\cdot E_0 \left[  S_0(t | W )\right]\cdot E_0\left[\ell_0(Y,\Delta,W) \mid W\right] \right] \\
    =&E_0 \left[ S_0(L(W)  | W ) \{ H_0(L(W)  \wedge Y, W ) - \frac{I(Y \leq L(W) , \Delta = 1)S_0(Y^- | W )}{S_0(Y |W )R_0(Y | W )} \} \dot{\ell_0}(Y, \Delta, W ) \right].
\end{align*}
Thus the uncentered influence function $\phi_0(y, \delta, w)$ is 
$$S_0(L(w) \mid w) \left[ 1 - \left\{ \frac{I(y \leq L(w) , \delta = 1)S_0(y^- \mid w)}{ S_0(y | w) R_0(y \mid  w) }-\int_{(0,L(w)\wedge y]} \frac{S_0(u^- \mid w) F_0(du \mid w) }{S_0(u\mid w) R_0(u \mid w)^2 }\right\} \right].
$$
Since $\frac{F_0(du \mid w)}{R_0(u\mid w)} = \Lambda_0(du \mid w) \quad \text{and} \quad R_0(u \mid w) = S_0(u^- \mid w)G_0(u \mid w)
$, we have that the uncentered influence function $\phi_0(y, \delta, w)$ equals
\begin{equation*}
    S_0(L(w) \mid w) \left[ 1 - \left\{ \frac{I(y \leq L(w) , \delta = 1)}{S_0(y \mid w)G_0(y \mid w)} - \int_{(0,L(w)\wedge y]} \frac{\Lambda_0(du \mid w)}{ S_0(u \mid w)G_0(u \mid w)} \right\} \right].
\end{equation*}
The influence function $\phi_{0} - \Psi(P_0;L)$ is the efficient influence function under a nonparametric model.
\end{proof}


\subsection{Consistency and asymptotic normality}\label{app:thm}
\begin{lemma}\label{lemma1}
For any conditional survival function \( S \) and corresponding cumulative hazard \( \Lambda \), any conditional censoring function \( G \), 
any LPB $L_{n,\tau}$,
\begin{align}
    &P_0 \phi(S,G;L_{n,\tau}) - \Psi(P_0; L_{n,\tau})= \nonumber \\
    &E_0 \left[ S({L_{n,\tau}(W)} \mid  W) \int_{(0,L_{n,\tau}(w)]} 
\frac{S_0(u^- \mid  W)}{S(u \mid  W)} \left( \frac{ G_0(u \mid  W)}{G(u \mid  W)} - 1 \right) (\Lambda - \Lambda_0)(du \mid  W) \right]. \label{eq:remainder}
\end{align}
\end{lemma}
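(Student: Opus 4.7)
The plan is to evaluate $P_0 \phi(S,G;L_{n,\tau})$ by conditioning on $W$, simplify the two pieces inside the braces via the Volterra/Doob--Meyer-type identities that hold under Condition~\ref{id: cond1}, and then convert the remaining $S_0$-vs-$S$ discrepancy into a $\Lambda_0$-vs-$\Lambda$ discrepancy using the Duhamel equation, so that the final expression collects into a single double integral of $(\Lambda-\Lambda_0)(du\mid W)$.

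First, I would condition on $W=w$ inside $P_0\phi$. The two terms inside the braces involve $I(Y\leq L(w),\Delta=1)$ and $I(u\leq Y)$ respectively. Using Condition~\ref{id: cond1} and the identities derived in the proof of Proposition~\ref{prop}, namely $F_{0,1}(dy\mid w)=G_0(y^-\mid w)S_0(y^-\mid w)\Lambda_0(dy\mid w)$ and $R_0(u\mid w)=S_0(u^-\mid w)G_0(u^-\mid w)$, together with Fubini to pull the indicator $I(u\leq Y)$ into the $Y$-integral, one term produces
\[
\int_{(0,L(w)]} \frac{S_0(u^-\mid w)G_0(u^-\mid w)}{S(u\mid w)G(u\mid w)}\,\Lambda_0(du\mid w),
\]
while the other produces the same integrand against $\Lambda(du\mid w)$. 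Subtracting, this yields
\[
P_0\phi(S,G;L)(w,\cdot,\cdot) = S(L(w)\mid w) - S(L(w)\mid w)\int_{(0,L(w)]}\frac{S_0(u^-\mid w)G_0(u^-\mid w)}{S(u\mid w)G(u\mid w)}(\Lambda_0-\Lambda)(du\mid w).
\]

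Next, I would subtract $\Psi(P_0;L)=E_0[S_0(L(W)\mid W)]$. The only new piece is $E_0[S(L(W)\mid W) - S_0(L(W)\mid W)]$, and the Duhamel equation for product integrals (see, e.g., Theorem~6 of \citet{gill1990survey}) gives
\[
S_0(t\mid w) - S(t\mid w) = S(t\mid w)\int_{(0,t]}\frac{S_0(u^-\mid w)}{S(u\mid w)}(\Lambda-\Lambda_0)(du\mid w).
\]
Substituting, the $S$--$S_0$ discrepancy contributes $E_0[S(L(W)\mid W)\int \tfrac{S_0(u^-\mid W)}{S(u\mid W)}(\Lambda_0-\Lambda)(du\mid W)]$, which combines with the earlier double integral to produce
\[
E_0\!\left[S(L(W)\mid W)\int_{(0,L(W)]}\frac{S_0(u^-\mid W)}{S(u\mid W)}\left(1-\frac{G_0(u^-\mid W)}{G(u\mid W)}\right)(\Lambda_0-\Lambda)(du\mid W)\right].
\]
Flipping the signs of both factors in the integrand rewrites this as the target expression (up to the standard continuity-at-jumps convention that identifies $G_0(u^-\mid W)$ with $G_0(u\mid W)$ inside an integral against $\Lambda_0-\Lambda$, which is how the statement is displayed).

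The main obstacle is the Duhamel step: I need to verify its applicability for the generic survival functions $S$ and $S_0$ (in particular, that $S(u\mid w)$ stays bounded away from $0$ on the relevant interval so that $1/S(u\mid w)$ and $\Lambda(du\mid w)$ are well defined), and to handle the left-limit bookkeeping carefully so that the $u^-$ in the numerator and $u$ in the integrating measure are compatible. Once that is in place, the rest is algebraic manipulation and a Fubini application, both of which are routine.
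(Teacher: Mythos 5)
Your proposal is correct and follows essentially the same route as the paper's own proof: condition on $W$, reduce the two bracketed terms of $\phi$ to a single integral against $(\Lambda-\Lambda_0)$ weighted by $\frac{S_0\,G_0}{S\,G}$ using the $F_{0,1}$ and $R_0$ identities plus Fubini, and then convert $E_0[S(L(W)\mid W)-S_0(L(W)\mid W)]$ via the Duhamel equation before combining the two contributions. The left-limit bookkeeping you flag ($G_0(u^-)$ versus $G_0(u)$) is handled equally loosely in the paper's displayed proof, so it is not a gap relative to the paper's argument.
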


\begin{proof}
We first write
\[
\phi(S,G;L_{n,\tau})(w, \delta, y) = S({L_{n,\tau}(w)} \mid  w) \left\{ 1 - H_{S,G,n,\tau}(y, \delta, w) \right\},
\]
where we define
\[
H_{S,G,n,\tau}(y, \delta, w) := \frac{I(y \leq {L_{n,\tau}(w)}, \delta = 1) }{S(y \mid  w)G(y \mid  w)} - \int_{(0,L_{n,\tau}(w)\wedge y]} \frac{\Lambda(du \mid  w)} {S(u \mid  w)G(u \mid  w)}.
\]

\noindent We note that
\begin{align*}
    E_0\left[ H_{S,G,n,\tau}(Y, \Delta, W)\mid W = w \right] &= \int_{(0,L_{n,\tau}(w)]} \frac{S_0(y^- \mid  w) G_0(y \mid  w)}{S(y \mid  w) G(y \mid  w)} \Lambda_0(dy \mid  w) \\
    &\quad- \int_{(0,L_{n,\tau}(w)]} \frac{S_0(u^- \mid  w) G_0(u \mid  w)}{S(u \mid  w) G(u \mid  w)} \Lambda(du \mid  w) \\
    &= - \int_{(0,L_{n,\tau}(w)]} \frac{S_0(y^- \mid  w) G_0(y \mid  w)}{S(y \mid  w) G(y \mid  w)} (\Lambda - \Lambda_0)(dy \mid  w).
\end{align*}
Therefore, $P_0 \phi(S,G;L_{n,\tau}) - \Psi(P_0; L_{n,\tau})$ equals
\begin{align*}
&E_0 \left[ S({L_{n,\tau}(W)} |  W) \left\{ 1 - H_{S,G,n,\tau}(Y, \Delta, W) \right\} - S_0({L_{n,\tau}(W)} |  W) \right] \\
=& E_0 \left[ S({L_{n,\tau}(W)} |  W) \int_{(0,L_{n,\tau}(w)]} \frac{S_0(y^- \mid  W) G_0(y \mid  W)}{S(y \mid  W) G(y \mid  W)} (\Lambda - \Lambda_0)(dy \mid W) \right] \\
&\quad + E_0 \left[ S({L_{n,\tau}(W)} |  W) - S_0({L_{n,\tau}(W)} |  W) \right].
\end{align*}

\noindent Now, in view of the Duhamel equation \citep[Theorem~6 of][]{gill1990survey},
we have
\[
S({L_{n,\tau}(w)} |  w) - S_0({L_{n,\tau}(w)} |  w) = -S({L_{n,\tau}(w)} |  w) \int_{(0,L_{n,\tau}(w)]} \frac{S_0(y^- |  w)}{S(y |  w)} (\Lambda - \Lambda_0)(dy |  w),
\]
for each \( w \). Therefore, combining the two terms above yields Eq.~\eqref{eq:remainder}.
\end{proof}

Hereafter, we denote $\phi_{n,\tau}:=\phi(S_n,G_n;L_{n,\tau})$ and $\phi_{\infty,n,\tau}:=\allowbreak\phi(S_\infty,G_\infty;L_{n,\tau})$.
\begin{lemma}\label{lemma2}
    Denote $\Lambda_{\infty}$ and $G_{\infty}$ as the probabilistic limits of respective nuisance estimators $\Lambda_n$ and $G_n$. If Condition~\ref{cond: C6} holds, namely $\Lambda_\infty=\Lambda_0$ or $G_\infty=G_0$, then \( P_0 \phi_{\infty, n, \tau} = \Psi(P_0; L_{n,\tau}) \) and 
\begin{align*}
    \Psi(\hat{P}_n;L_{n,\tau}) - \Psi(P_0; L_{n,\tau}) &= P_n D(P_0, G_{\infty}, S_{\infty}, L_{n,\tau}) + n^{-1/2} \mathbbm{G}_n (\phi_{n,\tau} - \phi_{\infty,n,\tau}) \\&+ [P_0 \phi_{n,\tau} - \Psi(P_0; L_{n,\tau})].
\end{align*}
Here, recall that $\mathbbm{G}_n := n^{1/2} (P_n - P_0)$ denotes the empirical process. 
\end{lemma}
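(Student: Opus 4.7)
The plan is to establish the two claims in order; both are essentially algebraic given Lemma~\ref{lemma1}, and the substantive ingredient is already baked into the double-product form of the remainder derived there.

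For the first identity $P_0 \phi_{\infty,n,\tau} = \Psi(P_0; L_{n,\tau})$, I would apply Lemma~\ref{lemma1} with $(S,G,\Lambda) = (S_\infty, G_\infty, \Lambda_\infty)$. This writes
\[
P_0 \phi_{\infty,n,\tau} - \Psi(P_0; L_{n,\tau}) = E_0\!\left[S_\infty(L_{n,\tau}(W)\mid W)\!\int_{(0,L_{n,\tau}(W)]}\!\frac{S_0(u^-\mid W)}{S_\infty(u\mid W)}\!\left(\frac{G_0(u\mid W)}{G_\infty(u\mid W)}-1\right)(\Lambda_\infty - \Lambda_0)(du\mid W)\right].
\]
Condition~\ref{cond: C6} guarantees that at least one of $S_\infty = S_0$ (hence $\Lambda_\infty = \Lambda_0$) or $G_\infty = G_0$ holds, so one of the two factors $(G_0/G_\infty - 1)$ and $(\Lambda_\infty - \Lambda_0)(du\mid W)$ in the integrand vanishes identically, forcing the whole remainder to be zero.

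For the decomposition, I would first note that the one-step correction telescopes: because $D(\hat{P}_n, G_n, S_n; L_{n,\tau}) = \phi_{n,\tau} - \Psi(\hat{P}_n; L_{n,\tau})$, Eq.~\eqref{eq: est} reduces to $\hat{\psi}_{n,\tau} = P_n \phi_{n,\tau}$, so I read the left-hand side of the lemma as $\hat{\psi}_{n,\tau} - \Psi(P_0; L_{n,\tau})$. Then I would add and subtract strategically:
\begin{align*}
\hat{\psi}_{n,\tau} - \Psi(P_0; L_{n,\tau})
&= (P_n - P_0)\phi_{n,\tau} + [P_0 \phi_{n,\tau} - \Psi(P_0; L_{n,\tau})] \\
&= (P_n - P_0)\phi_{\infty,n,\tau} + (P_n - P_0)(\phi_{n,\tau} - \phi_{\infty,n,\tau}) + [P_0 \phi_{n,\tau} - \Psi(P_0; L_{n,\tau})].
\end{align*}
Using the first part, the leading term satisfies $(P_n - P_0)\phi_{\infty,n,\tau} = P_n \phi_{\infty,n,\tau} - \Psi(P_0; L_{n,\tau}) = P_n D(P_0, G_\infty, S_\infty; L_{n,\tau})$, since centering $\phi_{\infty,n,\tau}$ by the scalar $\Psi(P_0; L_{n,\tau})$ equals centering by its $P_0$-mean. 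The middle term is $n^{-1/2}\mathbbm{G}_n(\phi_{n,\tau} - \phi_{\infty,n,\tau})$ by the definition of $\mathbbm{G}_n$, and the third term already appears in the claim.

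The main obstacle is really Lemma~\ref{lemma1}, not Lemma~\ref{lemma2}: it is the Neyman-orthogonal, double-product structure of the first-order bias of $P_0\phi$ that lets a single factor of consistency in Condition~\ref{cond: C6} kill the mean of $\phi_{\infty,n,\tau}$. Granted that identity, Lemma~\ref{lemma2} is purely a re-centering step whose purpose is to split the error of the one-step estimator into a central-limit piece $P_n D(P_0, G_\infty, S_\infty; L_{n,\tau})$, an empirical-process drift $\mathbbm{G}_n(\phi_{n,\tau} - \phi_{\infty,n,\tau})$ that will later be shown to be $o_p(1)$ via Condition~\ref{cond: C1}, and a deterministic bias $P_0\phi_{n,\tau} - \Psi(P_0; L_{n,\tau})$ that Lemma~\ref{lemma1} combined with Condition~\ref{cond: C4} will bound at the $o_p(n^{-1/2})$ rate in the proof of Theorem~\ref{thm3}.
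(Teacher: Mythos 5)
Your proposal is correct and follows essentially the same route as the paper: the first identity is obtained by applying Lemma~\ref{lemma1} at $(S_\infty,G_\infty,\Lambda_\infty)$ and letting Condition~\ref{cond: C6} annihilate one factor of the double-product remainder, and the decomposition is the same add-and-subtract re-centering of $P_n\phi_{n,\tau}$ around $\phi_{\infty,n,\tau}$, using $P_0\phi_{\infty,n,\tau}=\Psi(P_0;L_{n,\tau})$ to identify the leading term with $P_n D(P_0,G_\infty,S_\infty;L_{n,\tau})$. Your explicit observation that the one-step estimator telescopes to $P_n\phi_{n,\tau}$ only makes transparent a step the paper uses implicitly, so there is no substantive difference.
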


\begin{proof} By Lemma~\ref{lemma1}, \( P_0 \phi_{\infty,n, \tau} - \Psi(P_0; L_{n,\tau}) \) equals
\[
E_0 \left[ S_\infty(L_{n,\tau}(W) |  W ) \int_{(0,L_{n,\tau}(w)]} \frac{S_0(u^{-} |  W )}{ S_\infty(u |  W )} \left\{ \frac{ G_0(u |  W )}{G_\infty(u |  W )} - 1 \right\} ( \Lambda_\infty - \Lambda_0)(du |  W ) \right] .
\]
If \ref{cond: C6} holds, then either \( ( \Lambda_\infty - \Lambda_0)(du |  w) = 0 \), 
or $ \frac{ G_0(u |  W )}{G_\infty(u |  W )} - 1 =0$.
Hence, \( P_0 \phi_{\infty, n,\tau} =\Psi(P_0; L_{n,\tau})\).

To establish the second part of the claim, we observe that
\begin{align*}
&\Psi_{\tau}(\hat{P}_n;L_{n,\tau}) - \Psi(P_0;L_{n,\tau}) \\
&= \frac{1}{|I_{\mathrm{cal}}|}  \sum_{i \in I_{\mathrm{cal}}} \phi_{n,\tau} (O_i) - \Psi(P_0;L_{n,\tau}) \\
&= P_n \phi_{\infty, n,\tau} - \Psi(P_0;L_{n,\tau}) + \frac{1}{|I_{\mathrm{cal}}|}  \sum_{i \in I_{\mathrm{cal}}} \phi_{n,\tau} (O_i) - P_n \phi_{\infty, n,\tau}\\
&= P_n D(P_0, G_{\infty}, S_{\infty}, L_{n,\tau}) + \frac{1}{|I_{\mathrm{cal}}|}  \sum_{i \in I_{\mathrm{cal}}} \left[\phi_{n,\tau} (O_i) - \phi_{\infty, n,\tau} (O_i)\right] \\
&= P_n D(P_0, G_{\infty}, S_{\infty}, L_{n,\tau}) +   P_n (\phi_{n,\tau} - \phi_{\infty, n,\tau}) \\
&= P_n D(P_0, G_{\infty}, S_{\infty}, L_{n,\tau}) +   (P_n - P_0) (\phi_{n,\tau} - \phi_{\infty, n,\tau}) + P_0 (\phi_{n,\tau} - \phi_{\infty, n,\tau}) \\
&= P_n D(P_0, G_{\infty}, S_{\infty}, L_{n,\tau}) + n^{-1/2} \mathbbm{G}_n (\phi_{n,\tau} - \phi_{\infty, n,\tau}) +  \left[P_0 \phi_{n,\tau} - \Psi(P_0; L_{n,\tau})\right].
\end{align*} 
\end{proof}

\noindent Next, we provide a bound on the $L_2(P_0)$ distance between the estimated influence function and its probabilistic limit.

\begin{lemma}\label{lemma3}
If \ref{cond: C2} holds, there exists a universal constant \( C(\eta) \) that may depend on $\eta$ such that, for each \( n \) and \( \tau \),
\begin{align*}
    \left\{ P_0 \left( \phi_{n,\tau} - \phi_{\infty,n,\tau} \right)^2 \right\}^{1/2} &\leq C(\eta) \left( A_{1,n,\tau} + A_{2,n,\tau} \right), \\
    \left\{ P_0 \left\{ \sup_{u \in [0,\tau]} \left| \phi_{n,u} - \phi_{\infty,n,u} \right| \right\}^2 \right\}^{1/2} &\leq C(\eta) \left( A^*_{1,n,\tau} 
+ A_{2,n,\tau} \right),
\end{align*}
where
\begin{align*}
    A_{1,n,\tau}^2 &:= E_0 \left[ \sup_{u \in [0,\tau]} \left| \frac{S_{n}(L_{n, \tau}(W) |  W)}{S_{n}(L_{n, u}(W) |  W)} - \frac{S_{\infty}(L_{n, \tau}(W) |  W)}{S_{\infty}(L_{n, u}(W) |  W)} \right|^2 \right], \\
    A_{2,n,\tau}^2 &:= E_0 \left[ \sup_{u \in [0,\tau]} \left| \frac{1}{G_{n}(L_{n, u}(W)|  W)} - \frac{1}{G_{\infty}(L_{n, u}(W) |  W)} \right|^2 \right], \\
    A^{*2}_{1,n,\tau} &:= E_0 \left[ \sup_{u \in [0,\tau]} \sup_{v \in [0,u]} \left| \frac{S_{n}(L_{n,u}(W) |  W)}{S_{n}(L_{n,v}(W)|  W)} - \frac{S_{\infty}(L_{n, u}(W) |  W)}{S_{\infty}(L_{n, v}(W) |  W)} \right|^2 \right].
\end{align*}
\end{lemma}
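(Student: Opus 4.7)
The plan is to decompose $\phi(S,G;L)(o)$ into its three summands and bound each contribution to $\phi_{n,u}-\phi_{\infty,n,u}$ pointwise, then apply Minkowski's inequality in $L_2(P_0)$ to identify the resulting suprema with $A_{1,n,\tau}$, $A_{2,n,\tau}$, and $A^*_{1,n,\tau}$. Throughout I write $R_n^s(v):=S_n(L_{n,s}(w)|w)/S_n(v|w)$ and $h_n(v):=1/G_n(v|w)$, with analogous definitions for the $\infty$ versions. A key structural observation is that $L_{n,s}(w)$ is non-decreasing in $s$ with $L_{n,0}(w)=0$ (since $S_n^{-1}(1|w)=0$), so $\{L_{n,s}(w):s\in[0,u]\}$ sweeps $[0,L_{n,u}(w)]$ up to plateaus of $S_n$ on which $R_n^u$ is itself constant; consequently $\sup_{v\in[0,L_{n,u}(w)]}|R_n^u(v)-R_\infty^u(v)|$ is dominated by $\sup_{s\in[0,u]}|R_n^u(L_{n,s})-R_\infty^u(L_{n,s})|$, and similarly for $|h_n-h_\infty|$.

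For the elementary terms $T_1(u):=S_n(L_{n,u})-S_\infty(L_{n,u})$ and $T_2(u)$ (the contribution of the indicator summand), I rewrite $T_1(u)=R_n^u(L_{n,0})-R_\infty^u(L_{n,0})$ to place it inside the $A_{1,n,\tau}$-type supremum at $s=0$, and for $T_2(u)$ apply the identity $ab-cd=(a-c)b+c(b-d)$ with $a=R_n^u(y)$, $b=h_n(y)$, $c=R_\infty^u(y)$, $d=h_\infty(y)$; monotonicity gives $R_\infty^u(y)\leq 1$ on $\{y\leq L_{n,u}\}$ and Condition~\ref{cond: C2} gives $h_n(y)\leq\eta$, so pointwise $|T_2(u)|\leq \eta|R_n^u(y)-R_\infty^u(y)|+|h_n(y)-h_\infty(y)|$ on $\{y\leq L_{n,u}\}$.

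The integral term $T_3(u)$ is handled by the key identity $S(L_{n,u}|w)\Lambda(du'|w)/S(u'|w)=dR^u(u')$ (valid for càdlàg $S$ since $\Lambda/S=d(1/S)$ in both continuous and jump cases), which converts the Stieltjes integral against the hazard into one against the monotone function $R^u$. This lets me write $T_3(u)=\int_{(0,L_{n,u}\wedge y]}h_n\,dR_n^u-\int_{(0,L_{n,u}\wedge y]}h_\infty\,dR_\infty^u=\int(h_n-h_\infty)\,dR_n^u+\int h_\infty\,d(R_n^u-R_\infty^u)$. The first summand is bounded by $\sup_v|h_n(v)-h_\infty(v)|$ times the total variation of $R_n^u$ on $[0,L_{n,u}\wedge y]$, which is at most $R_n^u(L_{n,u})-R_n^u(0)=1-S_n(L_{n,u})\leq 1$ by monotonicity. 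The second summand is controlled by integration by parts: the boundary terms are bounded by $2\eta\sup_v|R_n^u(v)-R_\infty^u(v)|$ using $h_n,h_\infty\leq\eta$, and the resulting interior integral $\int(R_n^u-R_\infty^u)\,dh_\infty$ is bounded by $(\eta-1)\sup_v|R_n^u(v)-R_\infty^u(v)|$ since $h_\infty$ is non-decreasing from $1$ to at most $\eta$.

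Summing the pointwise bounds, $|\phi_{n,u}-\phi_{\infty,n,u}|\leq C(\eta)\bigl[\sup_{v\in[0,L_{n,u}]}|R_n^u(v)-R_\infty^u(v)|+\sup_{v\in[0,L_{n,u}]}|h_n(v)-h_\infty(v)|\bigr]$. Applying the coverage observation and Minkowski's inequality in $L_2(P_0)$ at $u=\tau$ yields the first claim. For the second claim I first take the supremum over $u\in[0,\tau]$ pointwise, which promotes the $S$-ratio supremum to the double-sup form $\sup_{u\in[0,\tau]}\sup_{v\in[0,u]}|R_n^u(L_{n,v})-R_\infty^u(L_{n,v})|$ defining $A^*_{1,n,\tau}$, while the $1/G$ supremum over $v\in[0,L_{n,u}]\subseteq[0,L_{n,\tau}]$ remains dominated by $A_{2,n,\tau}$. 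The main technical obstacle is $T_3(u)$: the conversion $\Lambda/S\to d(1/S)$ is essential, since afterwards everything reduces to integration by parts against monotone functions and application of Condition~\ref{cond: C2}.
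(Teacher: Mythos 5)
Your proposal is correct and follows essentially the same route as the paper's proof: the same Minkowski decomposition into the constant, indicator, and integral pieces, the same key identity $S(L_{n,u}\mid w)\Lambda(du'\mid w)/S(u'\mid w)=d\bigl(S(L_{n,u}\mid w)/S(u'\mid w)\bigr)$ (the paper's ``backwards equation''), the same integration by parts on the integral piece (you freeze $h_\infty$ where the paper freezes $1/G_n$, a mirror-image split), and the same use of $1/G\leq\eta$ and monotonicity of survival ratios, yielding the bounds $C(\eta)(A_{1,n,\tau}+A_{2,n,\tau})$ and $C(\eta)(A^*_{1,n,\tau}+A_{2,n,\tau})$. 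The only place you go beyond the paper is the plateau-sweeping justification for replacing $\sup_{v\in[0,L_{n,u}(w)]}$ by $\sup_{s\in[0,u]}$ evaluated at $L_{n,s}(w)$; note this argument is not airtight for the difference terms (since $S_\infty$ and $G_\infty$ need not be constant on plateaus of $S_n$), but the paper performs the same identification implicitly, so it does not constitute a departure from, or a gap relative to, the paper's own proof.
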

\begin{proof}
First, we decompose
\[
\phi_{n,\tau} - \phi_{\infty,n,\tau} = \sum_{j=1}^5 U_{j,n,\tau},
\]
where we define pointwise
{\small\begin{align*}
U_{1,n,\tau}(o) &:= S_{n}(L_{n,\tau}(w) |  w) - S_{\infty}(L_{n,\tau}(w) |  w),\\
U_{2,n,\tau}(o) &:= - \frac{I(y \leq L_{n, \tau}(w), \delta = 1) }{ G_{\infty}(y | w)} \left\{ \frac{S_{n}(L_{n,\tau}(w) | w)}{S_{n}(y | w)} - \frac{S_{\infty}(L_{n,\tau}(w) | w)}{S_{\infty}(y | w)} \right\},\\
U_{3,n,\tau}(o) &:= - I(y \leq L_{n, \tau}(w), \delta = 1) \frac{S_{n}(L_{n,\tau}(w) |  w)}{S_{n}(y |  w)} \left\{ \frac{1}{G_{n}(y |  w)} - \frac{1}{G_{\infty}(y |  w)} \right\},\\
U_{4,n,\tau}(o) &:= \int_{(0,L_{n, \tau}(w) \wedge y]} \left\{ \frac{1}{G_{n}(u |  w)} - \frac{1}{G_{\infty}(u |  w)} \right\} \frac{S_{\infty}(L_{n,\tau}(w) |  w) \Lambda_{\infty}(du |  w)}{S_{\infty}(u |  w)},\\
U_{5,n,\tau}(o) &:= \int_{(0,L_{n, \tau}(w) \wedge y]} \frac{1}{G_n(u |  w)} \left\{ \frac{S_{n}(L_{n,\tau}(w) |  w) \Lambda_{n}(du |  w)}{S_{n}(u |  w)} - \frac{S_{\infty}(L_{n,\tau}(w) |  w) \Lambda_{\infty}(du |  w)}{S_{\infty}(u |  w)} \right\}.
\end{align*}}
By the triangle inequality, we have
\begin{equation*}
    P_0 \left( \phi_{n,\tau} - \phi_{\infty,n,\tau} \right)^2 = \| \phi_{n,\tau} - \phi_{\infty,n,\tau} \|_{2,P_0}^2 \leq \left( \sum_{j=1}^5 \left\| U_{j,n,\tau} \right\|_{2,P_0} \right)^2 = \left\{ \sum_{j=1}^5 (P_0 U_{j,n,\tau}^2)^{1/2} \right\}^2.
\end{equation*}
We bound each term \( P_0 U_{j,n,\tau}^2 \) separately.

\noindent \textbf{Term~1:} Since \( S_{n}(0 |  w) = S_{\infty}(0 |  w) = 1 \) for all \( w\), we have
\begin{align*}
    P_0 U_{1,n,\tau}^2 &= E_0 \left| S_{n}(L_{n,\tau}(W) |  W) - S_{\infty}(L_{n,\tau}(W) |  W) \right|^2 \\&= E_0 \left| \frac{S_{n}(L_{n,\tau}(W) |  W)}{S_{n}(0 |  W)} - \frac{S_{\infty}(L_{n, \tau}(W) |  W)}{S_{\infty}(0 |  W)} \right|^2 \\
    &\leq E_0 \left[ \sup_{u \in [0,\tau]} \left| \frac{S_{n}(L_{n,\tau}(W) |  W)}{S_{n}(L_{n,u}(W)|  W)} - \frac{S_{\infty}(L_{n,\tau}(W) |  W)}{S_{\infty}(L_{n,u}(W) |  W)} \right|^2 \right].
\end{align*}

\noindent \textbf{Term~2:} We have
\begin{align*}
    P_0 U_{2,n,\tau}^2 &= E_0 \left[ \frac{I(Y \leq L_{n, \tau}(W), \Delta = 1) }{G_{\infty}(Y |  W)^2} \left\{ \frac{S_{n}(L_{n,\tau}(W) |  W)}{S_{n}(Y |  W)} - \frac{S_{\infty}(L_{n, \tau}(W) |  W)}{S_{\infty}(Y |  W)} \right\}^2 \right] \\
    &\leq \eta^2 E_0 \left[ \sup_{u \in [0,\tau]} \left| \frac{S_{n}(L_{n,\tau}(W) |  W)}{S_{n}(L_{n,u}(W) |  W)} - \frac{S_{\infty}(L_{n,\tau}(W) |  W)}{S_{\infty}(L_{n,u}(W) |  W)} \right|^2 \right].
\end{align*}

\noindent \textbf{Term~3:} Similarly, since $S_n$ is a survival function, which is non-increasing,
\begin{align*}
    P_0 U_{3,n,\tau}^2 &= E_0 \left[ I(Y \leq L_{n, \tau}(W), \Delta = 1) \frac{S_{n}(L_{n,\tau}(W) | W)^2}{S_{n}(Y | W)^2} \left\{ \frac{1}{G_{n}(Y | W)} - \frac{1}{G_{\infty}(Y | W)} \right\}^2 \right] \\
    &\leq E_0 \left[ \sup_{u \in [0,\tau]} \left| \frac{1}{G_{n}(L_{n, u}(W) | W)} - \frac{1}{G_{\infty}(L_{n,u}(W) | W)} \right|^2 \right].
\end{align*}

\noindent \textbf{Term~4:}
{\small\begin{align*}
P_0 U_{4,n,\tau}^2 &= E_0 \left[ \int_{(0,L_{n, \tau}(W) \wedge Y]} \left\{ \frac{1}{G_{n}(u | W)} - \frac{1}{G_{\infty}(u | W)} \right\} \frac{S_{\infty}(L_{n,\tau}(W) | W) \Lambda_{\infty}(du | W)}{S_{\infty}(u | W)} \right]^2 \\
&\leq E_0 \left[ \sup_{u \in [0,L_{n,\tau}(W)]} \left| \frac{1}{G_{n}(u | W)} - \frac{1}{G_{\infty}(u | W)} \right| \int_{(0,L_{n,\tau}(W) \wedge Y]} \frac{S_{\infty}(L_{n,\tau}(W) | W) \Lambda_{\infty}(du | W)}{S_{\infty}(u | W)} \right]^2\\
&= E_0 \left[ \sup_{u \in [0,\tau]} \left| \frac{1}{G_{n}(L_{n,u}(W) | W)} - \frac{1}{G_{\infty}(L_{n,u}(W) | W)} \right| |1 - S_{\infty}(L_{n,\tau}(W) \wedge Y | W)| \right]^2\\
&\leq E_0 \left[ \sup_{u \in [0,\tau]} \left| \frac{1}{G_{n}(L_{n,u}(W) | W)} - \frac{1}{G_{\infty}(L_{n,u}(W) | W)} \right|^2 \right].
\end{align*}}

\noindent \textbf{Term~5:} We define
\[
B_{n,\tau}(u | w) := \frac{S_{n}(L_{n,\tau}(w) | w)}{S_{n}(u | w)} \quad \text{and} \quad B_{\infty,n,\tau}(u | w) := \frac{S_{\infty}(L_{n,\tau}(w) | w)}{S_{\infty}(u | w)},
\]
and we note that
\[
B_{n,\tau}(du | w) = \frac{S_{n}(L_{n,\tau}(w) | w) \Lambda_{n}(du | w)}{S_{n}(u | w)} \quad \text{and} \quad B_{\infty,n,\tau}(du | w) = \frac{S_{\infty}(L_{n,\tau}(w) | w) \Lambda_{\infty}(du | w)}{S_{\infty}(u | w)}
\]
by the backwards equation \citep[Theorem 5 of][]{gill1990survey}. Thus,
\[
P_0 U_{5,n,\tau}^2 = E_0 \left[ \int_{(0,L_{n,\tau}(W) \wedge Y]} \frac{1}{G_{n}(u | W)} \left\{ B_{n,\tau}(du | W) - B_{\infty,n,\tau}(du | W) \right\} \right]^2.
\]
Using integration by parts, this can be re-expressed as
{\small\begin{align*}
&E_0 \Bigg[ \frac{1}{G_{n}(L_{n,\tau}(W) \wedge Y | W)} \left\{ B_{n,\tau}(L_{n,\tau}(W) \wedge Y | W) - B_{\infty,n,\tau}(L_{n,\tau}(W) \wedge Y | W) \right\} \\&-\{B_{n,\tau}(0 | W) - B_{\infty,n,\tau}(0 | W) \} - \int_{(0,L_{n,\tau}(W) \wedge Y]} \frac{B_{n,\tau}(u | W) - B_{\infty,n,\tau}(u | W) }{G_{n}(u | W)^2}  G_{n}(du | W) \Bigg]^2\\
&\leq \eta^2 E_0 \left[ \left| \frac{S_{n}(L_{n,\tau}(W) | W)}{S_{n}(L_{n,\tau}(W) \wedge Y | W)} - \frac{S_{\infty}(L_{n,\tau}(W) | W)}{S_{\infty}(L_{n,\tau}(W) \wedge Y | W)} \right| + |S_{n}(L_{n,\tau}(W) | W) - S_{\infty}(L_{n,\tau}(W) | W)| \right.\\
&\quad\quad+ \sup_{u \in [0,\tau]} \left. \left| \frac{S_{n}(L_{n,\tau}(W) | W)}{S_{n}(L_{n,u}(W)  | W)} - \frac{S_{\infty}(L_{n,\tau}(W) | W)}{S_{\infty}(L_{n,u}(W) | W)} \right| \frac{1}{G_{n}(L_{n,u}(W) | W)} \right]^2\\
&\leq 3 \eta^4 E_0 \left[ \sup_{u \in [0,\tau]} \left| \frac{S_{n}(L_{n,\tau}(W) | W)}{S_{n}(L_{n,u}(W) | W)} - \frac{S_{\infty}(L_{n,\tau}(W) | W)}{S_{\infty}(L_{n,u}(W) | W)} \right|^2 \right].
\end{align*}}
\end{proof}

Next, we bound the convergence rate of the empirical process term $\mathbbm{G}_n (\phi_{n,\tau} - \phi_{\infty,n,\tau})$.

\begin{lemma}\label{lemma4}
If \ref{cond: C1}--\ref{cond: C2} hold, then
\begin{equation}
    n^{-1/2} \mathbbm{G}_n (\phi_{n,\tau} - \phi_{\infty,n,\tau}) = o_p(n^{-1/2}) \label{eq: empirical process 1}
\end{equation}
and
\begin{equation}
    n^{-1/2}\sup_{u \in [0,\tau]} \left|  \mathbbm{G}_n (\phi_{n,u} - \phi_{\infty,n,u}) \right|=O_p(n^{-1/2}). \label{eq: empirical process 3}
\end{equation}
If \ref{cond: C3} also holds, then
\begin{equation}
    n^{-1/2} \sup_{u \in [0,\tau]} \left| \mathbbm{G}_n (\phi_{n,u} - \phi_{\infty,n,u}) \right| = o_p(n^{-1/2}). \label{eq: empirical process 2}
\end{equation}
\end{lemma}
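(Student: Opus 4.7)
The overall plan is to exploit sample splitting: $S_n$, $G_n$, $L_{n,\cdot}$, $\phi_{n,\cdot}$, and $\phi_{\infty,n,\cdot}$ are measurable with respect to $\mathcal{D}_{\mathrm{train}}$, while $\mathbbm{G}_n$ is built from the independent calibration fold. Conditioning on $\mathcal{D}_{\mathrm{train}}$ freezes these objects, reducing the problem to controlling an empirical process indexed by a fixed class of functions. The envelopes that appear in the bounds are precisely the quantities $A_{1,n,\tau}$, $A^{*}_{1,n,\tau}$, $A_{2,n,\tau}$ already controlled by Lemma~\ref{lemma3}.

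For the pointwise claim in Eq.~\eqref{eq: empirical process 1}, conditionally on $\mathcal{D}_{\mathrm{train}}$ we apply Chebyshev's inequality to the fixed function $f := \phi_{n,\tau} - \phi_{\infty,n,\tau}$, obtaining $|\mathbbm{G}_n f| = O_p(\|f\|_{2,P_0})$ conditional on $\mathcal{D}_{\mathrm{train}}$. The first bound in Lemma~\ref{lemma3} together with Conditions~\ref{cond: C1}--\ref{cond: C2} then gives $\|f\|_{2,P_0} = o_p(1)$, and unconditioning yields $\mathbbm{G}_n f = o_p(1)$; dividing by $n^{1/2}$ completes the argument.

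For the uniform bounds in Eqs.~\eqref{eq: empirical process 3} and \eqref{eq: empirical process 2}, the key structural observation is that $u \mapsto L_{n,u}(w)$ is non-decreasing in $u$ and uniformly bounded by $t_0$ for every $w$, which is immediate from Eq.~\eqref{eq:L_tau}. Fixing a training realization and following the five-term decomposition $\phi_{n,u} - \phi_{\infty,n,u} = \sum_{j=1}^{5} U_{j,n,u}$ from the proof of Lemma~\ref{lemma3} (in particular the integration-by-parts step applied to $U_{5,n,u}$), each pointwise map $u \mapsto U_{j,n,u}(w,\delta,y)$ has uniformly bounded total variation in $u$, with a bound depending only on $\eta$ via Condition~\ref{cond: C2}. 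Consequently, $\mathcal{F}_{n,\tau} := \{\phi_{n,u} - \phi_{\infty,n,u}: u \in [0,\tau]\}$ is a bounded-variation class indexed by a single scalar and, by standard empirical-process theory, is $P_0$-Donsker with $L_2(P_0)$-bracketing entropy of order $\log(1/\varepsilon)$. A maximal inequality applied with the constant envelope obtained from Condition~\ref{cond: C2} yields $\sup_{u \in [0,\tau]} |\mathbbm{G}_n(\phi_{n,u} - \phi_{\infty,n,u})| = O_p(1)$, which is Eq.~\eqref{eq: empirical process 3}. For the sharper Eq.~\eqref{eq: empirical process 2}, Condition~\ref{cond: C3} together with \ref{cond: C1} forces the uniform envelope $C(\eta)(A^{*}_{1,n,\tau} + A_{2,n,\tau})$ from the second bound of Lemma~\ref{lemma3} to be $o_p(1)$, and the asymptotic equicontinuity of $P_0$-Donsker classes then upgrades $O_p(1)$ to $o_p(1)$.

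The main obstacle is verifying the Donsker property for $\mathcal{F}_{n,\tau}$ with constants that do not depend on the realization of $\mathcal{D}_{\mathrm{train}}$. This reduces to showing that the total variation of $u \mapsto \phi_{n,u}(w,\delta,y)$ is uniformly bounded in $(w,\delta,y)$ by a constant depending only on $\eta$ and $t_0$, which in turn rests on $S_n$ and $S_\infty$ being survival functions (hence bounded by $1$ and monotone in their time argument), $G_n$ and $G_\infty$ being bounded below by $1/\eta$ via Condition~\ref{cond: C2}, and the monotonicity of $u \mapsto L_{n,u}(w)$ just noted. Once this uniform control is in place, the three claims follow from a routine application of Chebyshev's inequality, the bracketing maximal inequality, and the equicontinuity characterization of Donsker classes, respectively.
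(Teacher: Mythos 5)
Your overall architecture matches the paper's: condition on $\mathcal{D}_{\mathrm{train}}$ so that $\{\phi_{n,u}-\phi_{\infty,n,u}\}$ is a fixed class for the calibration empirical process, handle the pointwise claim by a second-moment bound plus Lemma~\ref{lemma3} (the paper uses a CLT/Markov step and Lemma~6.1 of \citet{Chernozhukov2018} where you use Chebyshev conditionally---essentially equivalent), and obtain both uniform claims from a maximal inequality whose bound is the envelope norm $C(\eta)(A^{*}_{1,n,\tau}+A_{2,n,\tau})$, which is $O(1)$ always and $o_p(1)$ under \ref{cond: C3}. Where you genuinely diverge is the entropy step: the paper cites Lemma~5 of \citet{westling2023inference} for a bounded-uniform-entropy-integral (hence Donsker) class conditionally on the training data and then applies the maximal inequality from Theorem~2.5.2 of \citet{vaart1996weak}, whereas you propose to deduce the Donsker property from bounded variation of $u\mapsto\phi_{n,u}(o)-\phi_{\infty,n,u}(o)$ (using monotonicity of $u\mapsto L_{n,u}(w)$ from Eq.~\eqref{eq:L_tau}) via bracketing. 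That route is more self-contained, but note that your ``asymptotic equicontinuity'' upgrade for Eq.~\eqref{eq: empirical process 2} should be replaced by the same maximal inequality applied with the shrinking envelope: the class changes with $n$, so equicontinuity of a fixed Donsker class does not directly apply.

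The step you assert but do not establish---and which is subtler than the facts you list suggest---is that the total variation in $u$ is bounded by a constant depending only on $\eta$ (and $t_0$). For the indicator terms this is easy, but for the compensator terms such as $u \mapsto S_n(L_{n,u}(w)\mid w)\int_{(0,L_{n,u}(w)\wedge y]}\Lambda_n(dv\mid w)/\{S_n(v\mid w)G_n(v\mid w)\}$, the naive bound (bounding each factor by $\eta$ and integrating the hazard) gives total variation of order $\eta\,\Lambda_n(L_{n,\tau}(w)\mid w)=\eta\log\{1/S_n(L_{n,\tau}(w)\mid w)\}$. This is controlled by $\log\{1/(1-\tau)\}$ for a fixed $\tau<1$, but it is not a constant depending only on $\eta$, and it degenerates as $\tau\to 1$, which is exactly the regime needed when the lemma is invoked in Theorem~\ref{thm3} (there $L_{n,\tau}$ is capped only by $G_n^{-1}(\eta_2\mid\cdot)$, with no lower bound on $S_n$ at that point). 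The claim can be repaired, but it requires an observation absent from your sketch: since $t\mapsto 1/G_n(t\mid w)$ is non-decreasing and bounded by $\eta$ under \ref{cond: C2}, the map $t\mapsto S_n(t\mid w)\int_{(0,t]}\Lambda_n(dv\mid w)/\{S_n(v\mid w)G_n(v\mid w)\}$ is itself non-decreasing and bounded by $\eta$ (it is a weighted average of $1/G_n$ along the cumulative-hazard scale, at least in the continuous case; jumps require the usual left-limit care), after which truncation at $y$ adds at most another monotone piece. Without this argument, or a substitute such as the paper's appeal to \citet{westling2023inference}, the pivotal uniform-entropy step of your proof is not established as written.
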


\begin{proof}
   For Eq.~\eqref{eq: empirical process 1}, it suffices to show that
\[\left| \mathbbm{G}_n \left( \phi_{n,\tau} - \phi_{\infty,n,\tau} \right) \right| = o_p(1).
\]
As \( \mathbbm{G}_n \) does not involve the training data and is hence independent of \( \phi_{n,\tau} - \phi_{\infty,n,\tau} \), 
we condition on the training data first.
By the Central Limit Theorem (CLT), $\mathbbm{G}_n(\phi_{n,\tau} - \phi_{\infty,n,\tau})$ converges weakly to $\mathcal{N}(0, P_0(\phi_{n,\tau} - \phi_{\infty,n,\tau})^2)$.
By Lemma~\ref{lemma3}, Conditions~\ref{cond: C1} and \ref{cond: C3}, and conditional convergence implying unconditional convergence \citep[Lemma~6.1]{Chernozhukov2018},
unconditional on the training data, we have that $E_0 \left| \mathbbm{G}_n \left( \phi_{n,\tau} - \phi_{\infty,n,\tau} \right) \right| = o(1)$ and thus, by Markov's inequality,
$$\left| \mathbbm{G}_n \left( \phi_{n,\tau} - \phi_{\infty,n,\tau} \right) \right| = o_p(1).$$

Next, we prove the uniform bound in Eq.~\eqref{eq: empirical process 2}. For a fixed $u$, $L_{n,u}$ only depends on the training data $\mathcal{D}_{\mathrm{train}}$. 
Conditioning on the training data, Lemma~5 in \citet{westling2023inference} implies that $\{\phi_{n,u} - \phi_{\infty,n,u}: u \in [0, \tau]\}$ is a $P_0$-Donsker class, because bounded uniform entropy integral implies a Donsker class. Thus,
by the fourth displayed equation on page~128 of Theorem~2.5.2 in \citet{vaart1996weak},
there exists $\bar{C} > 0$, such that
\begin{align*}
E_0 &\left\{ \sup_{u \in [0,\tau]} \left| \mathbbm{G}_n \left( \phi_{n,u} - \phi_{\infty,n,u} \right) \right| \right\}\\
&\leq \bar{C}\left\{ E_0  \sup_{u \in [0,\tau]} \left[ \phi_{n,u}(O) - \phi_{\infty,n,u}(O) \right]^2 \right\}^{1/2}
\leq \bar{C} C(\eta) \left( A_{1,n,u}^* + A_{2,n,u} \right),
\end{align*}
where the second inequality follows by Lemma~\ref{lemma3}.

Again by Lemma~6.1 in \citet{Chernozhukov2018},
\begin{equation*}
    E \left[ \sup_{u \in [0,\tau]} \left\lvert \mathbbm{G}_n \left( \phi_{n,u} - \phi_{\infty,n,u} \right) \right\rvert \right]
\leq \bar{C} C(\eta) \left( 
 A^*_{1,n,u}  
+ A_{2,n,u} \right). 
\end{equation*}
By \ref{cond: C1}, \ref{cond: C2}, and \ref{cond: C3}, this bound converges to zero, 
and thus 
$$n^{-1/2} \sup_{u \in [0, \tau]} \left| \mathbbm{G}_n (\phi_{n,u} - \phi_{\infty,n,u}) \right| = o_p(n^{-1/2}).$$

We finally prove the weaker uniform bound in Eq.~\eqref{eq: empirical process 3}. Since $u \mapsto S_n(u \mid w)$ and $u \mapsto S_\infty(u \mid w)$ are positive non-increasing, it holds that
$$\frac{S_{n}(L_{n,u}(w) |  w)}{S_{n}(L_{n,v}(w)|  w)}, \frac{S_{\infty}(L_{n, u}(w) |  w)}{S_{\infty}(L_{n, v}(w) |  w)} \in (0,1]$$
whenever $0 \leq v \leq u$. Hence, the terms $A_{1,n,\tau}$ and $A_{1,n,\tau}^*$ in Lemma~\ref{lemma3} are both bounded above by 1.
Thus, by Lemma~\ref{lemma3} and a similar argument as above,
$$n^{-1/2}\sup_{u \in [0,\tau]} \left|  \mathbbm{G}_n (\phi_{n,u} - \phi_{\infty,n,u}) \right|=O_p(n^{-1/2}).$$

\end{proof}

\begin{proof}[Proof of Theorem~\ref{thm2}]
Lemma~\ref{lemma2} implies that $|\Psi(\hat{P}_n;L_{n,\tau}) - \Psi(P_0; L_{n,\tau})|$ is bounded above by 
\[\left| P_n D(P_0, G_{\infty}, S_{\infty}, L_{n,\tau}) \right| + \left|   n^{-1/2}\mathbbm{G}_n (\phi_{n,\tau} - \phi_{\infty, n,\tau}) \right| + \left| P_0 (\phi_{n,\tau} - \phi_{\infty, n,\tau}) \right|.
\]

By Lemma~\ref{lemma2}, $P_0 \phi_{\infty, n,\tau} = \Psi(P_0; L_{n,\tau})$, thus the first term is an empirical mean of a mean zero function, which by the weak law of large numbers is \( o_p(1) \).
By Lemma \ref{lemma4},
the the second term is also $O_p(n^{-1/2})$.
The third term is bounded by $[ P_0 (\phi_{n,\tau} - \phi_{\infty,n,\tau})^2 ]^{1/2}$, which is $o_p(1)$ by Lemma~\ref{lemma3} and  Condition~\ref{cond: C1}--\ref{cond: C2}.
Thus, \( |\Psi (\hat{P}_n;L_{n,\tau}) - \Psi(P_0;L_{n,\tau})| = o_p(1) \).

For uniform consistency, Lemma~\ref{lemma2} and the triangle inequality yield
\begin{align*}
    \sup_{u \in [0,\tau]} |\Psi (\hat{P}_n;L_{u,\tau})- \Psi(P_0;L_{n,u})| \leq& \sup_{u \in [0,\tau]} \left| P_n D(P_0, G_{\infty}, S_{\infty},L_{n,u}) \right|\\ + \sup_{u \in [0,\tau]} &\left| n^{-1/2} \mathbbm{G}_n (\phi_{n,u} - \phi_{\infty,n,u}) \right| 
+ \sup_{u \in [0,\tau]} \left|  P_0 (\phi_{n,u} - \phi_{\infty,n,u}) \right|.
\end{align*}
By Condition~\ref{cond: C1}, $S_\infty(t_0 \mid w) \geq \zeta$ for $P_0$-almost every $w$.
The first term on the right-hand side of the inequality above 
is $O_p(n^{-1/2})$. This is because, as we will show next, $D(P_0, G_{\infty}, S_{\infty},L_{n,u})$ falls in a $P_0$-Donsker class conditioning on the training data. 
Recall that $D(P_0, G_{\infty}, S_{\infty},L_{n,u}) = \phi(S_{\infty},G_{\infty};L_{n,u}) - \Psi(P_0;L_{n,u})$ in Eq.~\eqref{eq: d_tau}. We start from the first uncentered influence function term, \begin{align*}
    \phi&(S_{\infty},G_{\infty};L_{n,u}) : (w,\delta,y) \mapsto \\ &S_{\infty}(L_{n,u}(w) \mid w) \left[ 1 - \left\{ \frac{I(y \leq L_{n,u}(w) , \delta = 1)}{S_{\infty}(y \mid w)G_{\infty}(y \mid w)} - \int_{(0,L_{n,u}(w)\wedge y]} \frac{\Lambda_{\infty}(dv \mid w)}{ S_{\infty}(v \mid w)G_{\infty}(v \mid w)} \right\} \right].
\end{align*}
First, the class of weighted indication functions $\{(w,\delta,y) \mapsto \frac{I(y \leq L_{n,u}(w) , \delta = 1)}{S_{\infty}(y \mid w)G_{\infty}(y \mid w)}: u \in [0, \tau]\}$ is VC-subgraph \citep[e.g., Problem~20 on Page~153 of][]{vaart1996weak} and therefore BUEI.
Second, consider the class $\{(w,\delta,y) \mapsto \int_{(0,L_{n,u}(w)\wedge y]} \frac{\Lambda_{\infty}(dv \mid w)}{ S_{\infty}(v \mid w)G_{\infty}(v \mid w)}: u \in [0, \tau]\} $ that is a subset of the function class $\{(w,\delta,y) \mapsto \int_{(0,t\wedge y]} \frac{\Lambda_{\infty}(du \mid w)}{ S_{\infty}(u \mid w)G_{\infty}(u \mid w)}: t \in [0, t_0)\}$, which is a subset of a VC-major by Lemma~2.6.19 in \citet{vaart1996weak} and thus also BUEI.
Third, $\{(w,\delta,y) \mapsto S_{\infty}(L_{n,u}(w) \mid w): u\in[0,\tau]\}$ is uniformly bounded and BUEI as well. By applying properties (iv)–(v) of Lemma 9.17 in~\citet{kosorok2008introduction}, the products and sums of uniformly bounded BUEI classes remain BUEI. Consequently, $\{\phi(S_{\infty},G_{\infty};L_{n,u}): u \in [0,\tau]\}$ is a BUEI class.
Furthermore, $\{(w,\delta,y) \mapsto \Psi(P_0;L_{n,u}):u\in[0,\tau]\}$ is VC-subgraph by Lemma~2.6.15 in~\citet{vaart1996weak} and thus also BUEI. Applying (iv) in Lemma 9.17 in~\citet{kosorok2008introduction} again yields that $\{D(P_0, G_{\infty}, S_{\infty},L_{n,u}) : u\in[0, \tau)\}$ is a uniformly bounded BUEI class, and hence is $P_0$-Donsker. 

By Lemma~\ref{lemma4}, the second term is $O_p(n^{-1/2})$. 
By Lemma~\ref{lemma3}, the third term is $o_p(1)$. We thus find that
\[
\sup_{u \in [0,\tau]} |\Psi (\hat{P}_n;L_{n,u}) - \Psi (P_0;L_{n,u})| = o_p(1).
\]  
\end{proof} 

\begin{proof}[Proof of Theorem~\ref{thm3}]
We first consider a fixed tuning parameter $\tau$.
Since Condition~\ref{cond: C6}
is implied by Condition~\ref{cond: C5},
by Lemma~\ref{lemma2},
\begin{align}
    \Psi&(\hat{P}_n;L_{n,\tau}) - \Psi(P_0;L_{n,\tau}) \nonumber \\&= P_n D(P_0, G_0, S_0;L_{n,\tau}) + n^{-1/2} \mathbbm{G}_n (\phi_{n,\tau} - \phi_{0,n,\tau}) + [P_0 \phi_{n,\tau} - \Psi(P_0;L_{n,\tau})].\label{eq:decomp}
\end{align}
By Condition~\ref{cond: C1}--\ref{cond: C3} and \ref{cond: C5}, 
the second term on the right-hand side of Eq.~\eqref{eq:decomp} is $o_p(n^{-1/2})$ by Lemma~\ref{lemma4}. By Lemma~\ref{lemma1},
\begin{align*}
    &P_0 \phi_{n,\tau} - \Psi(P_0;L_{n,\tau}) \\
    &= E_0 \left[ S_n({L_{n,\tau}(W)} \mid  W) \int_{(0,L_{n,\tau}(W)]} 
\frac{S_0(u^- \mid  W)}{S_n(u \mid  W)} \left( \frac{ G_0(u \mid  W)}{G_n(u \mid  W)} - 1 \right) (\Lambda_n - \Lambda_0)(du \mid  W) \right].
\end{align*}
By the Duhamel equation~\citep[Theorem 6]{gill1990survey}, we have that
\[
\frac{S_0(u^- \mid W)}{S_{n}(u \mid W)} (\Lambda_n - \Lambda_0)(du \mid  W) = \left( \frac{S_0}{S_{n}} - 1 \right) (du \mid W),
\]
and so the above equals
\[
E_0 \left[ S_n({L_{n,\tau}(W)} \mid  W) \int_{(0,L_{n,\tau}(W)]} 
\left( \frac{ G_0(u \mid  W)}{G_n(u \mid  W)} - 1 \right) \left( \frac{S_0}{S_{n}} - 1 \right) (du \mid W) \right].
\]
Therefore, 
\[\left|P_0 \phi_{n,\tau} - \Psi(P_0;L_{n,\tau}) \right| = r_{n,\tau}.
\]
which is $o_p(n^{-1/2})$ by Condition~\ref{cond: C4}. 
This establishes that
\[
\Psi(\hat{P}_n;L_{n,\tau}) = \Psi(P_0;L_{n,\tau}) + P_n D(P_0, G_0, S_0,L_{n,\tau}) + o_p(n^{-1/2}).
\]

The above argument applies to $\tau \in [0,1]$ uniformly. In particular, the $o_p(n^{-1/2})$ term is uniform over $\tau \in [0,1]$, so Eq.~\ref{eq:ovvar} holds.

\end{proof}

\section{Additional Simulation Results}\label{app:res}

In this section, we present additional experimental results to illustrate the robustness and efficiency of our method.

\subsection{Marginal, Gaussian process for $C \mid W$}\label{app:res1}
Our empirical coverage and average LPB results with the conditional censoring function estimated with a Gaussian Process (GP) model are shown in Figure~\ref{fig:marg3}. It is worth noting that the GP model is misspecified: the estimated conditional censoring function 
varies little between covariates.
Despite this, our method maintains the most stable coverage at the target \(1-\alpha\) level across all settings. In contrast, all of DFT-fixed, DFT-adaptive-T and DFT-adaptive-CT methods perform well in homoscedastic settings (1 and 2) but degrade in Settings 3 and 4.
This is due to the double robustness of TCsurv against poor estimation of nuisance functions, whereas the other methods are more sensitive to nuisance estimation errors. 

For average LPB, TCsurv is comparable to other methods and excels in the most complex univariate setting (4), where it is one of the only two methods maintaining the target coverage. TCsurv has a higher mean LPB, reflecting its greater efficiency.

\begin{figure}[h]
    \centering
    \begin{subfigure}{\linewidth}
        \centering
        \includegraphics[width=0.9\linewidth]{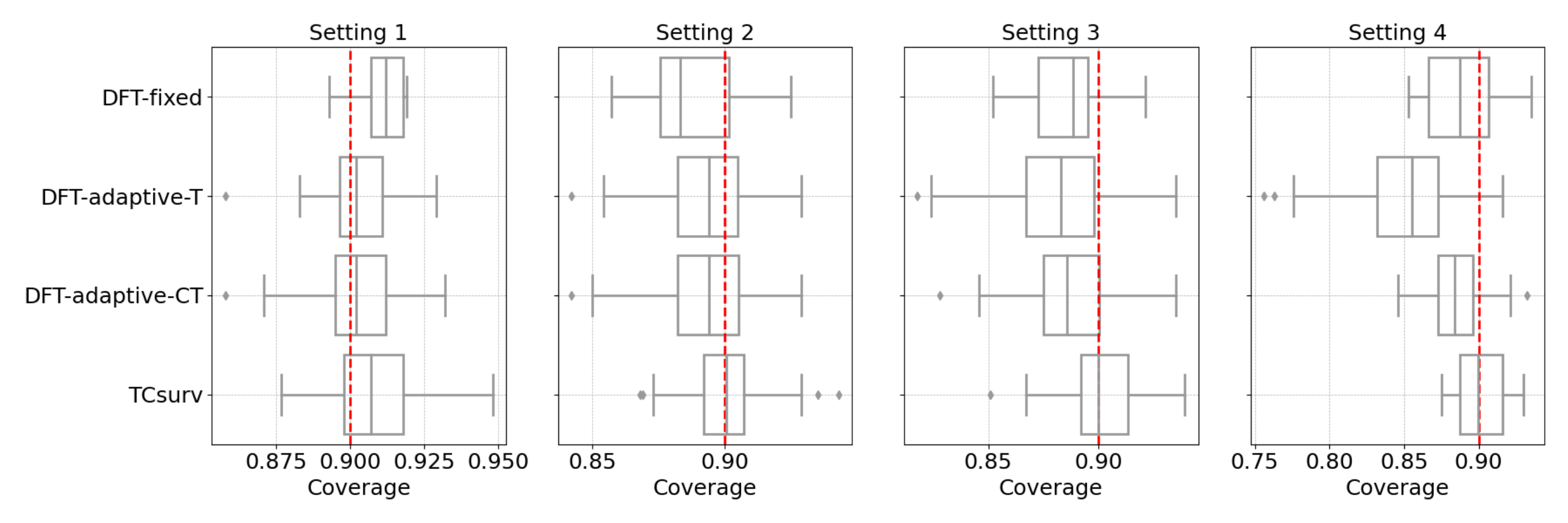}
    \end{subfigure}
    
    \vspace{1em} 

    \begin{subfigure}{\linewidth}
        \centering
        \includegraphics[width=0.9\linewidth]{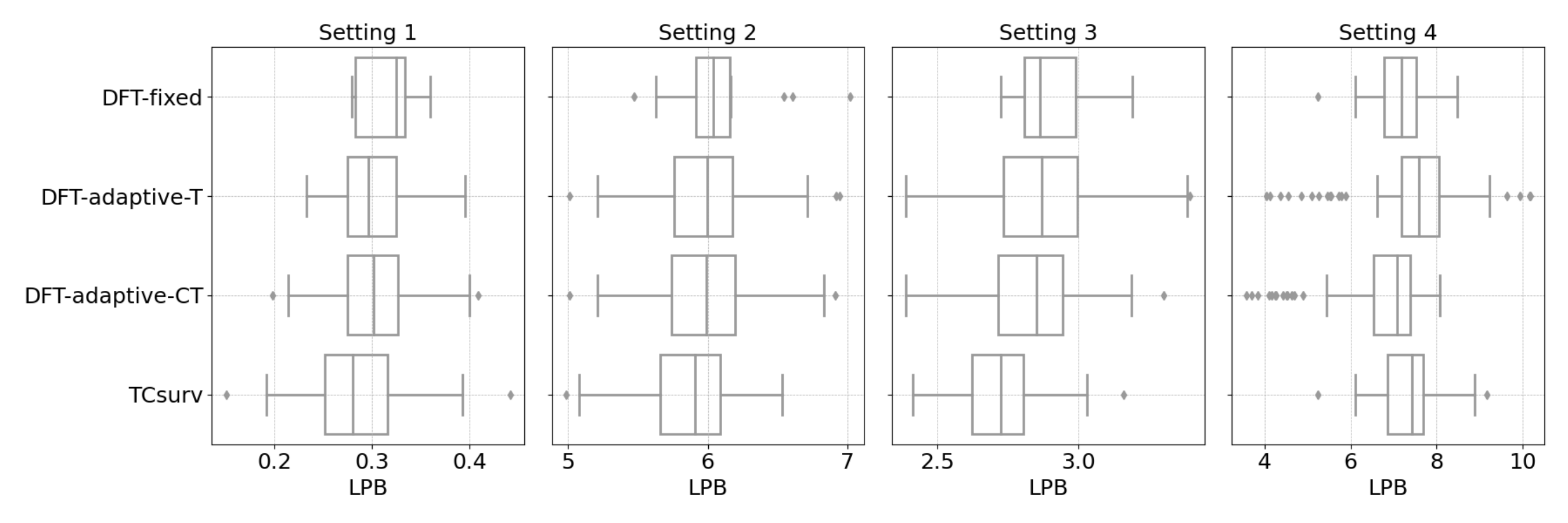}
    \end{subfigure}
    
    \caption{Empirical coverage (top) and average LPBs (bottom) of all the candidate methods under settings 1–4, where $W$ is univariate. The boxplot shows results from 100 independent draws of datasets. The dashed red line corresponds to the target coverage level $1 - \alpha = 90\%$. 
}
    \label{fig:marg3}
\end{figure}

\subsection{Marginal, no access to $C$}\label{app:res2}
We provide the marginal results for the original version of our method (TCsurv-NC), where the true censored time $C$ is unavailable when the event is observed, that is, $T \leq C$. As in previous experiments, $S_n$ is estimated using the \texttt{survSuperlearner} R package, while $G_n$ is modeled with the Cox proportional hazards model. The results show that, TCsurv-NC achieves the target marginal coverage rate and nearly the same interval size as TCsurv, even without full access to $C$. 

\begin{figure}[h]
    \centering
    \begin{subfigure}{\linewidth}
        \centering
        \includegraphics[width=0.9\linewidth]{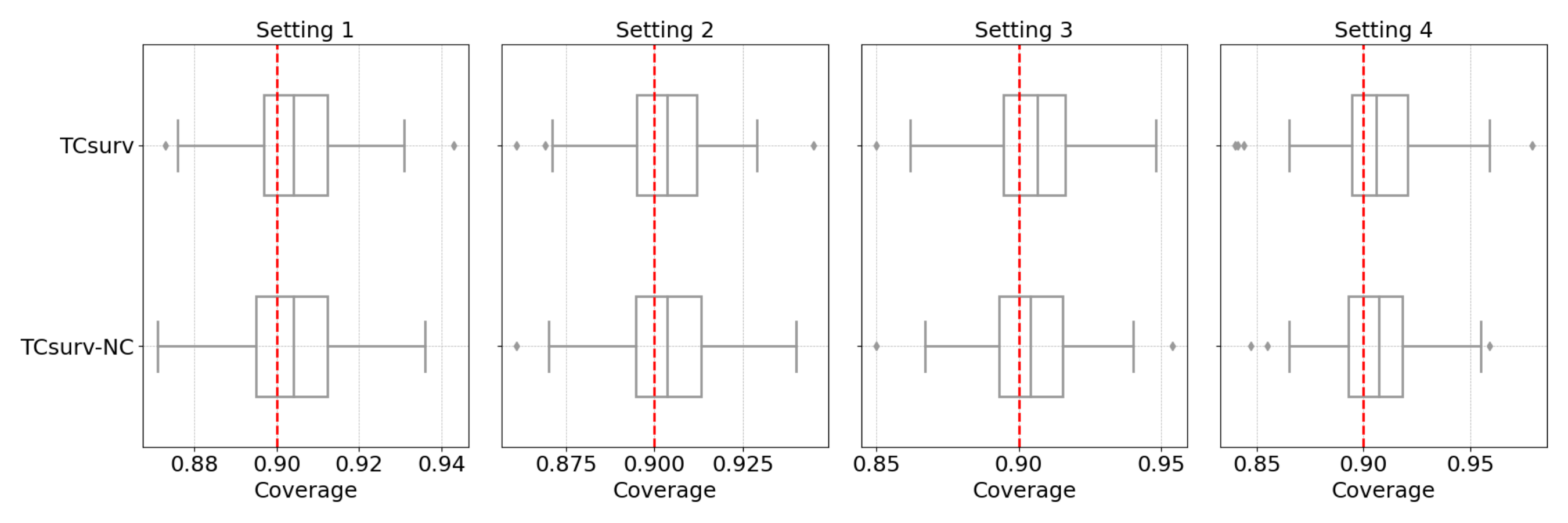}
    \end{subfigure}
    

    \begin{subfigure}{\linewidth}
        \centering
        \includegraphics[width=0.9\linewidth]{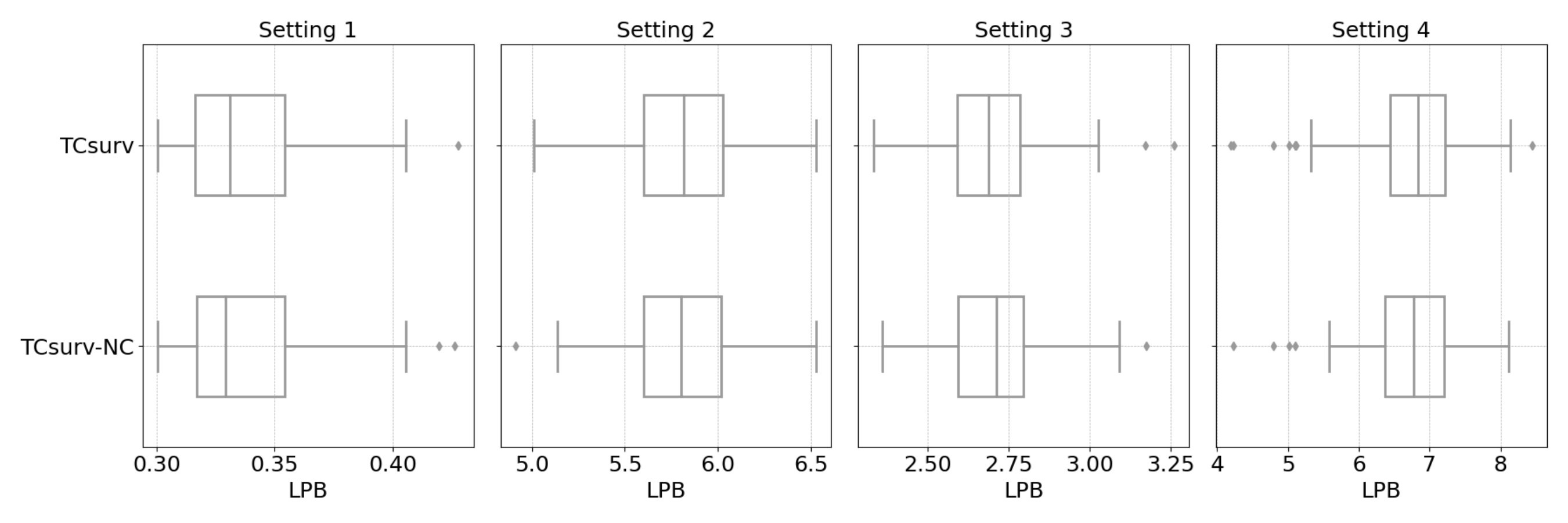}
    \end{subfigure}
    
    \caption{Empirical coverage (top) and average LPBs (bottom) of TCsurv and TCsurv-NC under settings 1–4, where $W$ is univariate. The boxplot shows results from 100 independent draws of datasets. The dashed red line corresponds to the target coverage level $1 - \alpha = 90\%$.
    }
    \label{fig:marg7}
\end{figure}
\begin{figure}[ht!]
    \centering
    \begin{subfigure}[b]{0.45\textwidth}
        \centering
        \includegraphics[width=\textwidth]{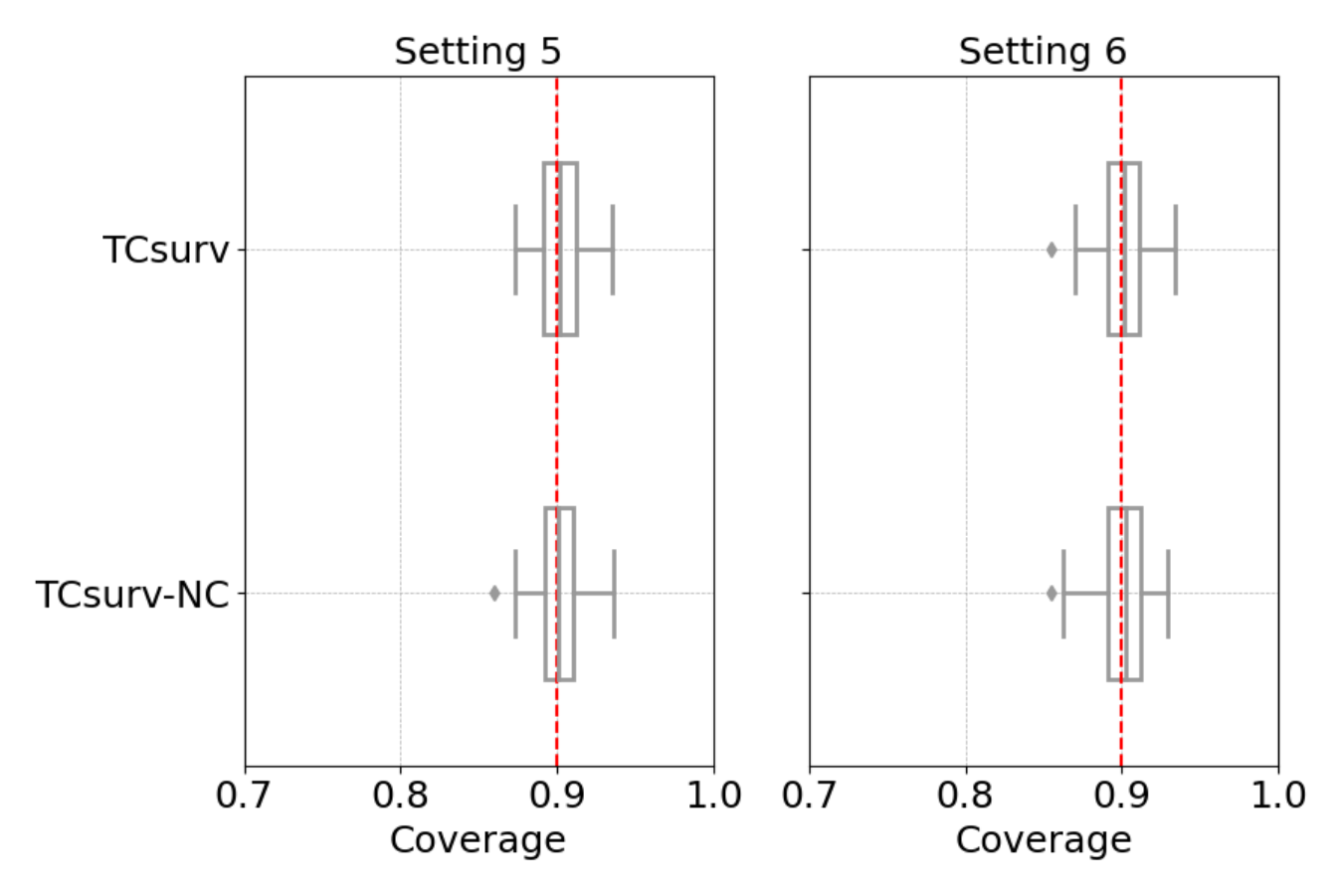}
    \end{subfigure}
    \hfill
    \begin{subfigure}[b]{0.45\textwidth}
        \centering
        \includegraphics[width=\textwidth]{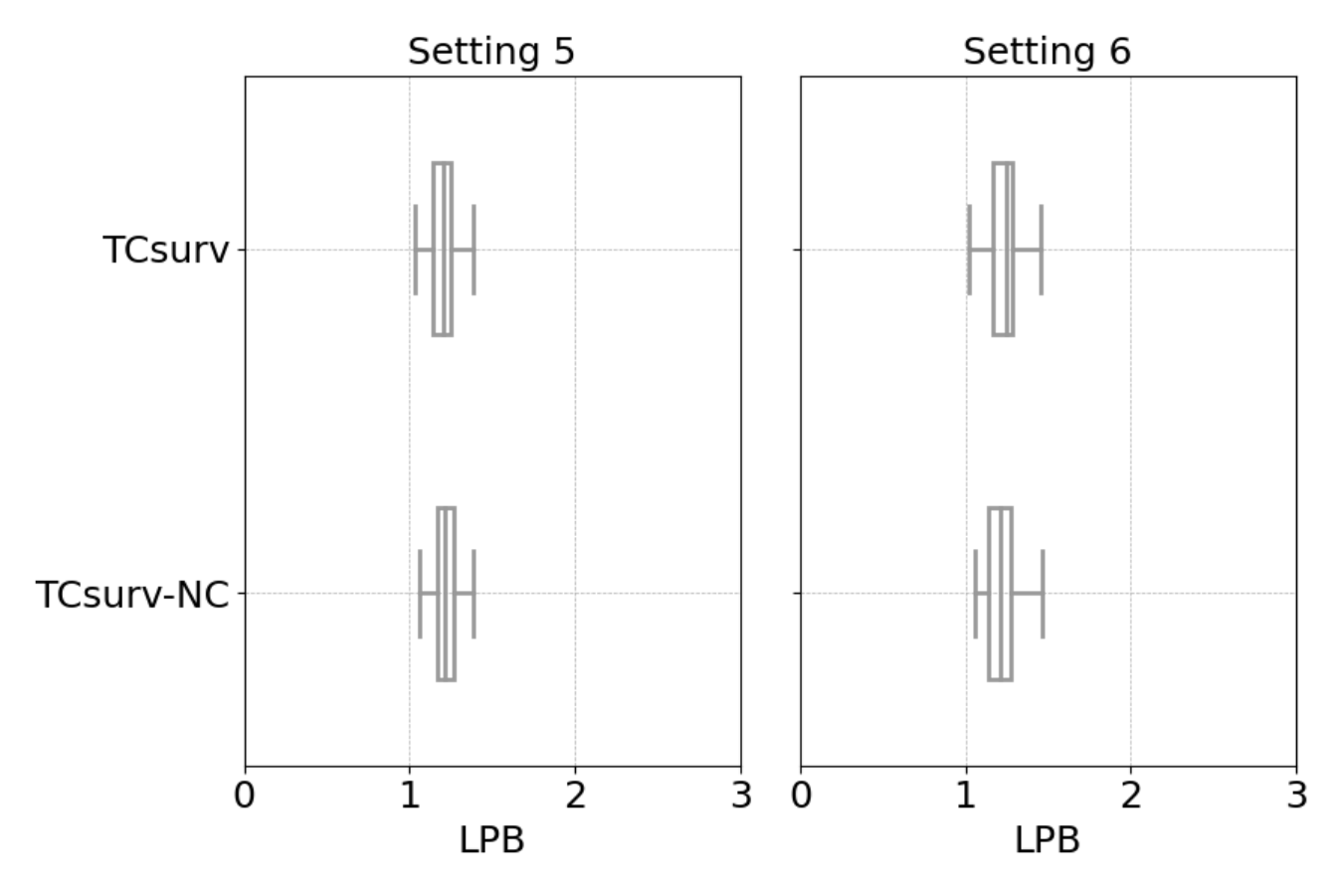}
    \end{subfigure}
    \caption{Comparison of TCsurv and TCsurv-NC on the empirical coverage (left) and average LPBs (right) in the multivariate experimental settings. 
    }
    \label{fig:marg8}
\end{figure}

\end{document}